\newcommand{\limi}[1]{\;\dot{#1}\;}
\newtheorem{thm}{\it{Theorem}}
\newtheorem{cor}{\it{Corollary}}
\newtheorem{lem}{\it{Lemma}}
\newtheorem{ex}{\bf{Example}}
\title{Lattice Sequential Decoding for LAST Coded MIMO Channels: Achievable Rate, DMT, and Complexity Analysis}
\author{
\authorblockN{Walid Abediseid and Mohamed Oussama Damen}
\authorblockA{Department of Elect. \& Comp. Engineering\\
University of Waterloo \\
Waterloo, Ontario N2L 3G1 \\
\{wabedise, modamen\}@uwaterloo.ca}
}
\begin{document}
\maketitle
\begin{abstract}
In this paper, the asymptotic performance of the lattice sequential decoder for LAttice Space-Time (LAST) coded MIMO channel is analyzed. We determine the rates achievable by lattice coding and sequential decoding applied to such a channel. The diversity-multiplexing tradeoff (DMT) under lattice sequential decoding is derived as a function of its parameter---\textit{the bias term}, which is critical for controlling the amount of computations required at the decoding stage. Achieving low decoding complexity requires increasing the value of the bias term. However, this is done at the expense of losing the optimal tradeoff of the channel. In this work, we derive the tail distribution of the decoder's computational complexity in the high signal-to-noise ratio regime. Our analysis reveals that the tail distribution of such a low complexity decoder is dominated by the outage probability of the channel for the underlying coding scheme. Also, the tail exponent of the complexity distribution is shown to be equivalent to the DMT achieved by lattice coding and lattice sequential decoding schemes. We derive the asymptotic average complexity of the sequential decoder as a function of the system parameters. In particular, we show that there exists a \textit{cut-off} multiplexing gain for which the average computational complexity of the decoder remains bounded.
\end{abstract}
\section{Introduction}
The most important parameters for the information data transmission problem are: the rate $R$, the probability of decoding error $P_e$, the block-length $m$, and the complexity (encoding and decoding). It is well-known that the noise introduced by the channel sets up a fundamental constant on how much data can be transmitted through the channel. This constant is called the \textit{capacity} \cite{Shannon1}. For a fixed channel, our goal is to transmit at rates close to capacity with low probability of decoding error using simple coding and decoding algorithms and short codes. However, there is a tradeoff. The tradeoff between the performance, the achievable rate, and the complexity is fundamental that exits in any communication system.

Low complexity capacity-achieving codes exist. A special type of these codes which will be considered intensively in this work are constructed based on \textit{lattices}---a mathematical approach for representing infinite discrete points in the Euclidean space \cite{Conway}. The theory of lattices has become a powerful tool to analyze many point-to-point digital and wireless communication systems, particularly, communication systems that can be well-described by the \textit{linear Gaussian vector channel} model. The channel model may be mathematically expressed as
\begin{equation}\label{linear_model}
\pmb{y}=\pmb{Bx}+\pmb{e},
\end{equation}
where $\pmb{x}\in\mathbb{R}^m$ is the input to the channel, $\pmb{y}\in\mathbb{R}^m$ is the output of the channel, $\pmb{e}\in\mathbb{R}^m$ is the additive Gaussian noise vector with entries that are independent identically distributed, zero-mean Gaussian random variables with variance $\sigma^2$, i.e., $\pmb{e}\sim\mathcal{N}(\pmb{0},\sigma^2\pmb{I}_m)$, and $\pmb{B}\in\mathbb{R}^{m\times m}$ is a matrix representing the channel linear mapping. 

It is the linearity of the channel that makes it a good match to the linearity of lattices. In this paper, we assume that $\pmb{x}$ is a codeword selected uniformly from a lattice code. Let $\Lambda_c\stackrel{\Delta}{=}\Lambda(\pmb{G})=\{\pmb{x}=\pmb{G}\pmb{z}:\pmb{z}\in\mathbb{Z}^m\}$ be a lattice in $\mathbb{R}^m$ where $\pmb{G}$ is an $m\times m$ full-rank lattice generator matrix. The Voronoi cell, $\mathcal{V}_{\pmb{x}}(\pmb{G})$, that corresponds to the lattice point $\pmb{x}\in\Lambda_c$ is the set of points in $\mathbb{R}^m$ closest to $\pmb{x}$ than to any other point $\pmb{\lambda}\in\Lambda_c$, with volume that is given by $V_c\triangleq {\rm Vol}(\mathcal{V}_{\pmb{x}}(\pmb{G}))=\sqrt{{\rm det}(\pmb{G}^{\mathsf{T}}\pmb{G})}$. An $m$-dimensional lattice code $\mathcal{C}(\Lambda_c,\pmb{u}_o,\mathcal{R})$ is the finite subset of the lattice translate $\Lambda_c+\pmb{u}_0$ inside the shaping region $\mathcal{R}$, i.e., $\mathcal{C}=\{\Lambda_c+\pmb{u}_0\}\cap\mathcal{R}$, where $\mathcal{R}$ is a bounded measurable region\footnote{In this paper, we consider a shaping region $\mathcal{R}$ that corresponds to the Voronoi cell $\mathcal{V}_s$ of a sublattice $\Lambda_s$ of $\Lambda_c$, i.e., $\Lambda_s\subseteq \Lambda_c$. The generated codes are called nested (or Voronoi) lattice codes (see \cite{EZ2} for more details).} of $\mathbb{R}^m$.

For the above channel model and assuming $\pmb{B}$ is perfectly known at the receiver, it is well-known that the maximum-likelihood (ML) decoder
\begin{equation}\label{ML_decoder}
\hat{\pmb{x}}={\rm arg}\min_{\pmb{x}\in\mathcal{C}}|\pmb{y}-\pmb{B}\pmb{x}|^2,
\end{equation}
is the \textit{optimal} solution that minimizes the word error probability $P_e=\Pr(\hat{\pmb{x}}\neq\pmb{x})$, where $\hat{\pmb{x}}$ is the output of the decoder. In such a decoder, the received signal is decoded to the nearest codeword or lattice point inside $\mathcal{R}$. Lattice coding and ML decoding achieve the capacity of the channel \cite{UR}. However, in ML decoding, searching over the codebook $\mathcal{C}$ is performed by a search algorithm (e.g., the sphere decoder) that takes into account the shaping region $\mathcal{R}$ which is referred to as \textit{boundary control}. Due to its exponential complexity, the implementation of the ML decoder is practically unfeasible and the design of low complexity receivers that achieve near optimal performance is considered a challenging problem.

Relaxing the boundary control, or \textit{lattice decoding}, is believed to reduce the complexity at the expense of introducing some error performance degradation. Lattice decoder algorithms reduce the complexity by relaxing the code boundary constraint and find the point of the underlying (infinite) lattice closest to the received point (which may or may not be a
code point). In lattice theory, this is usually referred to as the closest lattice point search problem (CLPS) \cite{CLSP}, which can be described by
\begin{equation}\label{lattice_decoding}
\hat{\pmb{x}}={\rm arg}\min_{\pmb{x}\in\Lambda_c}|\pmb{y}-\pmb{B}\pmb{x}|^2.
\end{equation}

Many researchers have studied the information-theoretic limits of lattice coding and decoding schemes for the linear Gaussian vector channel model (see ~\cite{deBuda} and references therein). For many scenarios that fall into such class, it has been shown that lattice decoding by itself cannot achieve the capacity of the channel at \textit{any} signal-to-noise ratio (SNR). One common way that is used to overcome such deficiency of lattice decoding is through the use of minimum mean square-error decision feedback equalization\footnote{For the case of additive white Gaussian noise channel, the capacity is achieved through the use of the linear minimum mean-square error estimator of the channel input from the channel output \cite{Forney_MMSE}.} (MMSE-DFE) \cite{GCD},\cite{EZ1}. In this case, the above channel model still applies with $\pmb{B}$ representing the feedback filter matrix of the MMSE-DFE\footnote{For finite system dimensionality, the additive noise $\pmb{e}$, although can be shown to have uncorrelated elements, may not be Gaussian. However, this have no significant effect on the results at the signal-to-noise ratio of interest (see \cite{GCD} for more details about this topic).}. We review the achievable rates of some important channels under MMSE-DFE lattice decoding that provide the main motivation to the rest of the work.
\begin{itemize}
\item \textbf{The Lattice Coded AWGN Channel}: This model corresponds to the case where the noise variance $\sigma^2=1$, and $\pmb{B}=\sqrt{1+\rho}\pmb{I}_m$, where $\rho$ is defined as the signal-to-noise ratio (SNR) at the receiver. Now, one can show that reliable communication can be made possible as long as we operate at rates
    $$R< \log\det(\pmb{B}^\mathsf{T}\pmb{B})^{1/m}= \log(1+\rho),$$
which is the capacity of the AWGN channel. A very interesting approach that may be used to prove the rate achievability of lattice coding and decoding schemes for such a channel is through the so-called the \textit{ambiguity decoder}. Lattice ambiguity decoder was originally developed by Loeliger in \cite{Loe} for the additive white Gaussian noise (AWGN) channel. The result was extended in \cite{GCD} to the quasi-static, Rayleigh fading $M\times N$ LAttice Space-Time (LAST) coded multiple-input multiple-output (MIMO) channel. The same technique will be used in this work to analyze the achievable rate of other efficient lattice decoders for the quasi-static MIMO channel.

\item \textbf{The LAST Coded $M\times N$ MIMO Channel}: We consider a quasi-static, Rayleigh fading MIMO channel with $M$-transmit, $N$-receive antennas, and no channel state information (CSI) at the transmitter and perfect CSI at the receiver. The complex base-band model of the received signal can be mathematically described by
\begin{equation}\label{complex_model}
\pmb{Y}^c=\sqrt{\rho}\pmb{H}^c\pmb{X}^c+\pmb{W}^c,
\end{equation}
where $\pmb{X}^c\in\mathbb{C}^{M\times T}$ is the transmitted space-time code matrix, $T$ is the codeword length (the number of channel usages), $\pmb{Y}^c\in\mathbb{C}^{N\times T}$ is the received signal matrix, $\pmb{W}^c\in\mathbb{C}^{N\times T}$ is the noise matrix, $\pmb{H}^c\in\mathbb{C}^{N\times M}$ is the channel matrix, and $\rho=\mathsf{SNR}/M$ is the normalized SNR at each receive antenna with respect to $M$. The elements of both the noise matrix and the channel fading gain matrix are assumed to be independent identically distributed zero mean circularly symmetric complex Gaussian random variables with variance $\sigma^2=1$. The equivalent real channel model can be described as in (\ref{linear_model}) with $\pmb{e}\sim\mathcal{N}(\pmb{0},0.5\pmb{I}_m)$.

An $M\times T$ space-time coding scheme is a full-dimensional LAST code if its vectorized (real) codebook (corresponding to the channel model (\ref{linear_model})) is a lattice code with dimension $m=2MT$. As discussed in \cite{GCD}, the design of space-time signals reduces to the construction of a codebook $\mathcal{C}\subseteq\mathbb{R}^{2MT}$ with code rate $R={1\over T}\log|\mathcal{C}|$, satisfying the input averaging power constraint
\begin{equation}\label{PC}
{1\over|\mathcal{C}|}\sum_{\pmb{x}\in\mathcal{C}}|\pmb{x}|^2\leq MT.
\end{equation}
For a fixed non-random channel matrix $\pmb{H}^c$, it has been shown in \cite{GCD} that the rate
\begin{equation}\label{LAST_rate}
R_{\rm LAST}(\rho,\pmb{H}^c)= \log\det(\pmb{B}^\mathsf{T}\pmb{B})^{1/2T}=\log\det\left(\pmb{I}_M+{\rho}(\pmb{H}^c)^\mathsf{H}\pmb{H}^c\right),
\end{equation}
 is achievable under lattice coding and MMSE-DFE lattice decoding.
\end{itemize}

Unfortunately, operating at a rate equal to capacity in the AWGN channel or at a rate equal to $R_{\rm LAST}$ in the quasi-static MIMO channel is not possible. This is due to the fact that the ML decoder or the MMSE-DFE lattice decoder (implemented via sphere decoding algorithms), suffers from high decoding complexity especially for large signal dimensions $m$ for which high rates can be achieved. As such, allowing low computational complexity search algorithms at the decoding stage requires the use of codes with rates below capacity.

It is well-known that lattice decoders that use sphere decoding algorithms can be considered as a search in a \textit{tree} \cite{MGDC}, \cite{HV}, \cite{DGC}. Generally speaking, a sphere decoding algorithm explores the tree of all possible lattice points and uses a \textit{path metric} in order to discard paths corresponding to points outside the search sphere. As an alternative to sphere decoding algorithms, \textit{sequential decoders} comprise a set of efficient and powerful decoding techniques able to perform the tree search. These decoders can achieve \textit{near}-optimal performance without suffering the complexity of the ML or the sphere decoder for coding rates not too close to the channel capacity. In this case, it is convenient to define the decoding complexity as the total number of nodes visited by the decoder during the search.

Conventional sequential decoders (e.g., Fano and Stack algorithms \cite{Fano},\cite{Stack}) were originally constructed as an alternative to the ML decoder to decode convolutional codes transmitted via discrete memoryless channels while achieving low (average) decoding complexity. Similar to the sphere decoder, the sequential decoder uses a path metric in order to eliminant a large subset of lattice points that have low chance to be extended by the search algorithm.

For a general discrete memoryless channel, the sequential decoder's path metric, termed as the Fano metric, is given by (see \cite{JZ})
\begin{equation}\label{fano_metric}
\mu(\pmb{x}_1^k)=\log\left(\Pr(\mathcal{H}(\pmb{x}_1^k)p(\pmb{y}_1^k|\mathcal{H}(\pmb{x}_1^k)\over p(\pmb{y}_1^k)\right),
\end{equation}
where $\mathcal{H}(\pmb{x}_1^k)$ is the hypothesis that $\pmb{x}_1^k$ forms the first $k$ symbols of the transmitted sequence, $\pmb{y}_1^k$ is the first $k$ symbols of the received sequence, and $p(\cdot)$ is the probability distribution function. We review the path metric of some important channels, and discuss the rate-complexity tradeoff achieved by the decoder and the optimal metric that leads to low decoding complexity.

\begin{itemize}
\item \textbf{The Binary Symmetric Channel}: In such a channel, the sequential decoder is used to decode linear convolutional codes. For $1\leq k\leq m$, if $\Pr(\mathcal{H}(\pmb{x}_1^k))$ is uniform over all nodes $\pmb{x}_1^k$ that consist of the first $k$ components of any valid codeword in the code, the path metric can be written as
    $$\mu(\pmb{x}_1^k)=\sum\limits_{i=1}^k\log{p(y_i|x_i)\over p(y_i)} - bk,$$
    where $p(y_i|x_i)$ is the channel transition probability, and $b$ is the \textit{bias term}. The bias is introduced to favor a longer path which is closer to the end of the tree and thus is more likely to be part of the optimal code path. Massey \cite{Massey} proved that at any decoding stage, extending the path with the largest Fano metric minimizes the probability that the extending path does not belong to the optimal code path. Moreover, Messey showed that the optimal bias that minimizes the computational complexity while achieving good error performance is $b=R$, the code rate.

    Although sequential decoding algorithms are simple to describe, the analysis of the decoder's computational complexity is considered difficult. This is due to the fact that the amount of computations performed by the decoder attempting to decode a message is random. Therefore, sequential decoding complexity is usually analyzed through its \textit{computational distribution}. For codes transmitted at rate $R$, the computational complexity of the sequential decoder, denoted by $C$, for the above mentioned channel follows a Pareto distribution \cite{Jacobs},
    \begin{equation}
    \Pr(C>L)\approx L^{-e(R)},\quad L \rightarrow \infty,
    \end{equation}
    where $L$ is the distribution parameter, and $e(R)$ is the tail distribution exponent that is a function of $R$. Theoretical analysis showed that $e(R)>1$ as long as $R<R_0$, where $R_0$ is called the channel \textit{computational cut-off} rate which is strictly less than the channel capacity. This means that the average computational complexity is kept bounded as long as we operate at rates below $R_0$. Therefore, for many coding and decoding schemes, $R_0$ is considered to be the ``practical'' capacity of the channel.

\item \textbf{The Lattice Coded AWGN Channel}: The problem of detecting and decoding the received signal $\pmb{y}$ is transformed into a tree search algorithm using the QR-decomposition on the code matrix $\pmb{G}$. Let $\pmb{Q}$ and $\pmb{R}$ be the orthonormal matrix and the upper triangular matrix with positive diagonal elements, respectively, that correspond to the QR-decomposition of $\pmb{G}$. Assuming $\pmb{x}=\pmb{Gz}$ was transmitted, then one can show that for moderate-to-large SNR, the Fano matric (\ref{fano_metric}) can be expressed as (see \cite{ONM}, \cite{MGDC})
    \begin{equation}\label{mod_metric}
    \mu(\pmb{z}_1^k)=bk-|{\pmb{y}'}_1^{k}-\pmb{R}_{kk}\pmb{z}_1^k|^2,\quad \forall\;1\leq k\leq m,
    \end{equation}
    where $\pmb{z}_1^k=[z_k,\cdots,z_2,z_1]^{\mathsf{T}}$ denotes the last $k$ components of the integer vector $\pmb{z}$, $\pmb{R}_{kk}$ is the lower $k\times k$ part of the matrix $\pmb{R}$, and ${\pmb{y}'}_1^k$ is the last $k$ components of the vector ${\pmb{y}'}=\pmb{Q}^{\mathsf{T}}\pmb{y}$.

    There are several works that discuss sequential decoding for the lattice coded AWGN channel \cite{TVZ}--\cite{ONM}. For such a channel, it is well-known that sequential decoding of lattice codes can operate ``efficiently'' (with bounded average complexity) at rates below $R_0$ which is only a factor of approximately 1.7 dB away from capacity \cite{FTC}. It is important to note that for sequential decoding algorithms that approximate lattice decoding, choosing $b=R$ is not the optimal solution that minimizes the average complexity. This is due to the infinite number of \textit{virtual} codewords as seen by the decoder, and hence the rate $R$ is meaning less. Therefore, one should appropriately select the bias term $b$ so that we attain near-optimal performance while reducing the computational complexity. This may be achieved by the sequential decoder by ensuring that the metric along the correct path increases on average, while decreases along other paths. In this case, we choose $b$ such that $\mathsf{E}_{\pmb{e}}\{\mu(\pmb{x}_1^k)\}>0$ (assuming $\pmb{x}$ is the correct path). This corresponds to $b>\mathsf{E}\{|[\pmb{e}]_i|^2\}=\sigma^2$. In fact, the optimal bias under lattice coding and sequential decoding schemes that is used to minimize the average decoding effort was derived in \cite{ONM} and is given by\footnote{This optimal bias term was derived based on the work by Poltyrev in \cite{Polytrev}. In his work, Poltyrev considered the fundamental limits achieved by lattice coding and decoding for the unconstrained AWGN channel. A new notion of capacity was considered in his work based on the characteristic of lattices (the lattice density to be specific) which is termed as the volume-to-noise ratio (see \cite{Polytrev} for more details).}
    \begin{equation}\label{bias_AWGN}
    b=\sigma^2\log{4\over \pi\sigma^2},
    \end{equation}
    where $\sigma^2$ is the variance of the channel noise.

\item \textbf{The $M\times N$ MIMO Channel}: Applying sequential decoders for the detection of signals transmitted via MIMO communication channels introduced an alternative and interesting approach to solve the CLPS problem \cite{CLPS} that is related to the optimal decoding rule in such channels \cite{MGDC}, \cite{NMO}. Murugan et.~al.~\cite{MGDC} showed that lattice sequential decoders, although sub-optimal, are capable of achieving good, and for some cases near-ML, error performance. The analysis was considered \textit{only} for the case of uncoded MIMO channel (i.e., V-BLAST). It was demonstrated that lattice sequential decoders achieve the \textit{maximum} receive diversity provided by the channel, and for low signal dimensions these decoders achieve near-ML performance while significantly reducing decoding complexity compared to lattice decoders. Specifically, they showed that for any \textit{fixed} (large enough) $b$, the sequential decoder achieves the maximum diversity gain $N$ with computational complexity that scales at most linearly with the signal dimension $m$ as long as $\rho>\rho_0$, where $\rho_0$ is the minimum SNR required for the average complexity to remain bounded. It must be noted that $\rho_0$ is a function of $m$ and $b$ and increases proportionally with these parameters as they increase.

     Interestingly, the lattice sequential decoder allows for a systematic approach for trading off performance for complexity. It was argued in \cite{MGDC} that as $b\rightarrow 0$ we achieve the best (lattice decoding) performance but at the price of high complexity. It has been shown in \cite{MGDC} via simulation, that there exists a value of $b$, say $b^*$, such that for all $b\geq b^*$, the average computational complexity decreases monotonically with $b$. As $b\rightarrow \infty$, the sequential decoder becomes equivalent to the MMSE-DFE decoder and the number of visited nodes is always equal to $m$ at \textit{any} SNR. It is well-know \cite{MMSE2} that the MMSE-DFE decoder achieves a diversity gain equal to $N-M+1$. Therefore, one must expect that as we vary the bias term from 0 to $\infty$, the diversity gain must change from $N$ to $N-M+1$. How the diversity order changes with the bias parameter was not shown in \cite{MGDC}. Moreover, \textbf{the performance limits and the complexity achieved by lattice sequential decoders for (lattice) \textit{space-time coded} MIMO channel~\cite{GCD},~\cite{Tarokh1},~\cite{Hassibi} have not yet been studied}. This will be the main topic of the work presented here.

\end{itemize}

\subsection{Sequential Decoding for LAST Coded MIMO Channels}
In the quasi-static MIMO channel, achieving higher performance via \textit{diversity} and higher data rate via \textit{multiplexing} require incorporating error control coding (across antenna and time) at the transmitter (e.g., space-time codes). The Diversity-Multiplexing Tradeoff (DMT) \cite{ZT} has become the standard tool that is used to evaluate the performance limits of any coding and decoding schemes applied over outage-limited MIMO channels. Let the multiplexing gain $r$ be defined as in \cite{ZT}:
$$r=\lim_{\rho\rightarrow\infty}{R(\rho)\over\log{\rho}},$$
and the  diversity gain $d$ be defined as \cite{ZT}:
$$d=\lim_{\rho\rightarrow\infty}{-\log P_e(\rho)\over\log{\rho}}.$$
With the aid of the MMSE-DFE at the decoding stage, LAST coding and lattice decoding achieve (see \cite{GCD}) the \textit{optimal} tradeoff, denoted by $d_{\rm out}(r)$, of the channel
\begin{equation}\label{optimal_DMT}
d^*_{\rm out}(r)=(M-r)(N-r),\quad \forall 0\leq r\leq \min\{M,N\}.
\end{equation}

However, lattice decoders implemented via sphere decoding algorithms are only efficient in the high SNR regime and low signal dimensions, and exhibit exponential (average) complexity for low-to-moderate SNR and large signal dimensions \cite{HV}, \cite{JB}. On the other extreme, linear and non-linear receivers such as zero-forcing, MMSE, and MMSE-DFE decoders, are considered attractive alternatives to lattice decoders in MIMO channels and have been widely used in many practical communication systems \cite{Fosch}--\cite{MMSE2}. Unfortunately, the very low decoding complexity advantage that these decoders can provide comes at the expense of poor performance, especially for large signal dimensions. In fact, linear decoders cannot achieve the optimal DMT. However, with the aid of lattice reduction techniques \cite{YW}, Jald\'{e}n and Elia \cite{EJ} showed that the optimal tradeoff $d_{\rm out}^*(r)$ can be achieved using linear decoders at a worst-case complexity $O(\log\rho)$. This corresponds to a linear increase in complexity as a function of the code rate $R=r\log\rho$ at high SNR. Unfortunately, as mentioned in \cite{EJ}, this very low decoding complexity comes at the expense of a ``large'' performance (or SNR) \textit{gap} from the lattice decoder's error performance.

The problem of designing low complexity receivers for the MIMO channel that achieve \textit{near-optimal} performance (i.e., with improved SNR gap from ML or lattice decoding) is considered a challenging problem and has driven much research in the past years. In this work, we analyze the performance of \textit{lattice sequential decoding} that is capable of bridging the gap between lattice (or sphere) decoders and low complexity linear decoders (e.g., MMSE-DFE decoder).

The problem of detecting and decoding the received signal $\pmb{y}$ is transformed into a tree search algorithm using the QR-decomposition on the channel-code matrix $\pmb{BG}$. Similar to the lattice coded AWGN channel, the Fano metric that corresponds to the LAST coded MIMO channel is also given by (\ref{fano_metric}). However, there are few differences when it comes to the choice of the sequential decoding parameter $b$. First, as will be shown in the sequel, for the lattice coded AWGN channel, choosing the optimal value of $b$, provided in (\ref{bias_AWGN}), as the bias allows us to operate at rates close to $R_0$ (the cut-off rate). Unfortunately, the fading nature of the MIMO wireless channel prevents us from selecting an optimal value of $b$ that minimizes the decoding complexity for all SNR while still achieving near optimal performance. To further illustrate on this point, our results show that the achievable rate of the lattice sequential decoder, denoted by $R_b(\rho,\pmb{H}^c)$, applied to the LAST coded MIMO channel is equivalent to the achievable rate of the MMSE-DFE lattice decoding \textit{offset} by a term that depends solely on the bias and increases proportionally with $b$, i.e.,
$$R_b(\rho,\pmb{H}^c)=\log\det(\pmb{I}_M+\rho(\pmb{H}^c)^\mathsf{H}\pmb{H}^c)-\Gamma(b)$$
with $\Gamma(0)=0$. Now, one can see that when the channel is near outage, transmission at non-zero rates may not be possible for large values of bias term, for which low decoding complexity is expected. In order to overcome this problem, we either increase the transmission power which may not be possible due to some power constraints, or lower the value of the bias $b$ but at the price of increasing the decoding complexity.

It is clear from the above equation that, depending on the channel condition, the bias term should be adapted accordingly in order to maintain a non-zero achievable rate at any SNR. Now, if $b=0$ one achieves the optimal DMT $d_{\rm out}^*(r)$ but at the expense of increasing the decoding complexity. On the other hand, if we let $b\rightarrow\infty$, we achieve a DMT $d_{\infty}(r)=(N-M+1)(1-r/M)$, which corresponds to the tradeoff achieved by the MMSE-DFE decoder. Such a decoder has the lowest computational complexity (equal to $m$) but at the price of very poor performance. Therefore, by varying the bias term from 0 to $\infty$ we achieve different achievable DMT curves. In this paper, we will show in details how the achievable DMT changes with the bias.

\subsection{Outline of the Main Contributions}
The contribution of this paper can be classified into two classes: the asymptotic performance analysis of the lattice sequential decoder in terms of the achievable DMT, and the computational complexity of the decoder in terms of the complexity tail distribution and the average complexity. In order to fully characterize the achievable DMT of the decoder, we determine for the first time the rates achievable by lattice coding and sequential decoding applied to the outage-limited MIMO channel. We derive the DMT as a function of the decoder bias term, which is critical for controlling the amount of computations required at the decoding stage. Achieving low decoding complexity requires increasing the value of the bias term. However, this is done at the expense of losing the optimal tradeoff of the channel. In terms of performance analysis, the work establishes the DMT optimality of \textit{fixed}-bias lattice sequential decoding.

We analyze in details the computational tail distribution of the decoder and its average complexity. Specifically, we show that, at the high SNR regime, when the computational complexity exceeds a certain limit, say $L_0$, the tail distribution becomes upper bounded by the asymptotic outage probability achieved by LAST coding and sequential decoding schemes, i.e.,
$$\Pr(C\geq L)\leq \rho^{-d^*_{\rm out}(r)},\quad L\geq L_0,$$
where $d^*_{\rm out}(r)$ is the optimal DMT of the channel, and $L_0$ is a random variable that depends on the channel condition and the code matrix. This interesting result suggests that one may save on decoding complexity while still achieving near-outage performance by setting a \textit{time-out} limit at the decoder so that when the computational complexity exceeds this limit the decoder terminates the search and declares an error.

Similar to the discrete memoryless channel, our analysis reveals that, for a fixed bias sequential decoding algorithm, there exists a \textit{cut-off} multiplexing gain, denoted by $r_0$, for which the average computational complexity of the lattice sequential decoder remains bounded as long as we operate below such value. We argue that, in order to operate at multiplexing gains beyond $r_0$, large values of $b$ must be used. However, this comes at the price of loosing the optimal tradeoff. Hence, the lattice sequential decoder provides a systematic approach for tradeoff DMT, cut-off multiplexing gain, and complexity.

Our work is organized as follows. In Section II, we briefly describe the operation of various sequential decoding algorithms. In section III, we investigate the achievable rates of lattice sequential decoders for the outage-limited MIMO channel, and we derive the \textit{general} DMT achieved by the decoder as a function of its parameter --- the bias term. We show how this parameter plays a fundamental role in determining the DMT achieved by sequential decoding of lattice codes. The optimality of the lattice sequential decoder for the quasi-static MIMO channel is proven for finite bias term. The bias term is responsible for the excellent performance-complexity tradeoff achieved by the decoder. Sections IV and V provide complete analysis for the computational complexity tail distribution and the average complexity of the lattice sequential decoder in the high SNR regime. In section VI, our theoretical analysis is supported through simulation results. Finally, conclusions are provided in section VII.

Throughout the paper, we use the following notation. The superscript $^c$ denotes complex quantities, $^\mathsf{T}$ denotes transpose, and $^\mathsf{H}$ denotes Hermitian transpose. We refer to $g(z)\limi{=}z^a$ as $\lim_{z\rightarrow\infty}g(z)/\log(z)=a$, $\dot{\geq}$ and $\dot{\leq}$ are used similarly. For a bounded Jordan-measurable region $\mathcal{R}\subset\mathbb{R}^m$, $V(\mathcal{R})$ denotes the volume of $\mathcal{R}$, and $\pmb{I}_m$ denotes the $m\times m$ identity matrix. We denote $\mathcal{S}_m(r)$ by the $m$-dimensional hypersphere of radius $r$ with $V(\mathcal{S}_m(r))=(\pi r^2)^{m/2}/\Gamma(m/2+1)$, where $\Gamma(x)$ denotes the Gamma function.

\section{Lattice Fano/Stack Sequential Decoder}
The sequential search on a tree can be briefly described as follows: the search is attempted one branch at a time. Namely, if the decoder is ``located'' at a particular node, it will move forward along the most likely branch stemming from it and thus reach a new node, provided that the likelihood of the entire past path up to and including the new node exceeds a certain current threshold. If it does not, then the decoder must return to the preceding node. From there it will try to move forward along an alternate path. It will succeed in this attempt if the value of the likelihood of the new path exceeds a threshold appropriate to it. Thus, the decoder moves forward and backward with the hope that the likely paths are going to be examined so that the average decoding effort will be kept low.

Fano and Stack sequential decoders \cite{Fano}, \cite{Stack} are efficient tree search algorithms that attempt to find a ``best fit'' with the received noisy signal. As in conventional sequential decoder, to determine a best fit (path), the path metric given in (\ref{fano_metric}) is assigned to each node on the tree.

In the Stack algorithm, as the decoder searches the different nodes in the tree, an ordered list of previously examined paths of different lengths is kept in storage. Each stack entry contains a path along with its metric. Each decoding step consists of extending the top (best) path in the stack and reordering the stack list. The decoding algorithm terminates when the top path in the stack reaches the end of the tree (refer to \cite{Stack} for more details about the algorithm).

In the Fano algorithm, as the decoder searches nodes, the path metric is compared to a certain threshold denoted by $\mathcal{\tau}\in\{\cdots,-2\delta,-\delta,0,\delta,2\delta,\cdots\}$ where $\delta$ is called the step size. The decoder attempts to extend the most probable path by moving ``forward'' if the path metric stays above the running threshold. Otherwise, it moves ``backward'' searching for another path that may lead to the most probable transmitted sequence (refer to \cite{Fano} for more details about the algorithm).

Although the Stack decoder and the Fano algorithm generate essentially the same set of visited nodes (see \cite{MGDC}), the Fano decoder visits some nodes more than once. However, the Fano decoder requires essentially no memory, unlike the Stack algorithm. Also, it must be noted that the way the nodes are generated in both sequential algorithms plays an important role in reducing the computation complexity and for some cases may improve the detection performance. For example, the determination of the best and next best nodes is simplified in the CLPS problem by using the Schnorr-Euchner enumeration~\cite{DGC} which generates nodes with metrics in ascending order given any node $\pmb{z}_1^k$. However, it should be noted that for the entire paper, and for the sake of simplifying the analysis, we will consider the use of the Stack algorithm in our performance and complexity analysis. For the Fano algorithm the same results also apply.

\section{Outage Performance Analysis}
Our goal in this section is to analyze the DMT achieved by the LAST coding and MMSE-DFE lattice sequential decoding applied to the quasi-static $M\times N$ MIMO channel. The achievable DMT of a particular coding and decoding schemes in such a channel is usually derived using the outage probability $P_{\rm out}(\rho,R)$ which is defined as the probability that the coding rate cannot be supported by the channel. In other words, an outage occurs if the coding rate $R$ exceeds the achievable rate of the channel. As such, determining the achievable rate under LAST coding and sequential decoding is essential in order to determine the DMT.

\subsection{Achievable Rate}
As discussed in the introduction, the sequential decoder's output depends critically on the bias term $b$ (defined in (\ref{mod_metric})). Therefore, it is to be expected that the achievable rate as well as the outage probability will depend heavily on such decoding parameter. As discussed in the previous section, rates up to $\log\det(\pmb{I}_M+\rho(\pmb{H}^c)^\mathsf{H}\pmb{H}^c)$ are achievable by lattice coding and decoding. When the lattice decoder is replaced by the lattice Fano/Stack\footnote{For the Fano algorithm, we assume throughout the paper that only small values of step size $\delta$ is used by the decoder, and hence, its affect on the performance analysis can be neglected (see the proof of Theorem 4). Otherwise, choosing very large values of $\delta$ may result in very poor performance. For the Stack algorithm, we have $\delta=0$.} sequential decoder we get the following result:
\begin{thm}
For a fixed non-random channel matrix $\pmb{H}^c$, the rate
\begin{equation}\label{fano_rate}
R_b(\pmb{H}^c,\rho)\triangleq \max\biggl\{\log\det(\pmb{I}_M+\rho(\pmb{H}^c)^\mathsf{H}\pmb{H}^c)-2M\log\left({1+\sqrt{1+8\alpha}\over 2}\right),0\biggr\},
 \end{equation}
is achievable by LAST coding and MMSE-DFE lattice sequential decoding with bias term $b$, where $\alpha$ is given by
\begin{equation}\label{alpha}
\alpha=\left(r_{\rm eff}(\pmb{B}\pmb{G})\over 2r_{\rm pack}(\pmb{B}\pmb{G})\right)^2b,
\end{equation}
and $r_{\rm eff}(\pmb{B}\pmb{G})$ and $r_{\rm pack}(\pmb{B}\pmb{G})$ are the effective radius and packing radius\footnote{The effective radius of a lattice $r_{\rm eff}$ is defined as the radius of the sphere with volume equal to the volume of the fundamental (Voronoi) region of the lattice. The packing radius $r_{\rm pack}$ is the largest radius of the sphere that is contained inside the Voronoi region of the lattice.} of the lattice generated using $\pmb{B}\pmb{G}$, respectively.
\end{thm}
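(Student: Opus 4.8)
The plan is to follow the ambiguity-decoder (Loeliger-type) route of \cite{GCD}, used there to establish the achievability of $R_{\rm LAST}$ under MMSE-DFE lattice decoding, the only new ingredient being a geometric description of the set of noise realizations for which the lattice Fano/Stack decoder with bias $b$ is guaranteed to return the transmitted codeword. First I would pass to the equivalent triangular tree-search model: writing $\pmb{B}\pmb{G}=\pmb{Q}\pmb{R}$ with $\pmb{R}$ upper triangular, one has $\pmb{y}'=\pmb{Q}^{\mathsf T}\pmb{y}=\pmb{R}\pmb{z}+\pmb{e}'$ with $\pmb{e}'=\pmb{Q}^{\mathsf T}\pmb{e}$ retaining the law of $\pmb{e}$, and the Stack decoder explores a tree with the partial metrics $\mu(\pmb{z}_1^k)=bk-|\pmb{y}'^k_1-\pmb{R}_{kk}\pmb{z}_1^k|^2$ of (\ref{mod_metric}).

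Second, I would isolate a sufficient condition for the Stack algorithm to output the transmitted path $\pmb{z}$: it suffices that
\[
\max_{\pmb{z}'\neq\pmb{z}}\ \mu((\pmb{z}')_1^m)\ <\ \min_{1\le k\le m}\mu(\pmb{z}_1^k),
\]
the maximum running over all \emph{full-length} competing paths --- if this holds, every prefix of the correct path outscores every competing leaf, so the decoder can never terminate on an incorrect leaf and must eventually output $\pmb{z}$, regardless of the (possibly large) number of nodes it wastes on incorrect subtrees. At full depth the term $bm$ cancels from the left side, so the full-length part of the inequality is exactly the nearest-lattice-point (Voronoi) condition of lattice decoding; the bias enters only through the partial-depth minimum on the right. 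Writing $\pmb{\lambda}=\pmb{R}(\pmb{z}-\pmb{z}')$, which runs over the nonzero vectors of $\Lambda(\pmb{R})$ (a rotation of $\Lambda(\pmb{B}\pmb{G})$, hence of norm at least $2\,r_{\rm pack}(\pmb{B}\pmb{G})$), violation of this condition forces $|\pmb{\lambda}+\pmb{e}'|^2\le |\pmb{e}'|^2+b(m-1)$ for some such $\pmb{\lambda}$; this set of effective noise vectors is the ``bad region'' of the decoder.

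Third, I would control the probability of this bad region with the Minkowski--Hlawka/ambiguity argument of \cite{Loe},\cite{GCD}: averaging over the ensemble of nested (Voronoi) lattice codes meeting the power constraint (\ref{PC}), the error probability vanishes as long as $V_c$ dominates the volume of the bad region (equivalently, as long as other lattice points stay out of it). Balancing the reward $+bk$, which is linear in depth, against the penalty $-|\cdot|^2$, which is quadratic in the decoding radius --- with $r_{\rm eff}(\pmb{B}\pmb{G})$ controlling the decoding-region volume and $r_{\rm pack}(\pmb{B}\pmb{G})$ the worst-case displacement $|\pmb{\lambda}|\ge 2r_{\rm pack}$ --- yields a quadratic relation $\beta^2=\beta+2\alpha$ for the effective radius-scaling factor, with $\alpha$ as in (\ref{alpha}); its positive root is $\beta=\big(1+\sqrt{1+8\alpha}\big)/2$. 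Hence the bias-$b$ decoder is no worse than an ambiguity decoder whose fundamental region is scaled by $\beta$, i.e.\ has volume $\beta^{m}$ times that of the Voronoi cell of $\Lambda(\pmb{B}\pmb{G})$.

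Converting this volume inequality into a rate, the factor $\beta^{m}$ costs $\tfrac1T\log\beta^{m}=2M\log\beta$ (since $m=2MT$), which subtracted from the MMSE-DFE lattice-decoding rate (\ref{LAST_rate}) and clipped below at zero gives exactly (\ref{fano_rate}). As a sanity check, $b=0$ gives $\alpha=0$, $\beta=1$, no loss, recovering $R_{\rm LAST}(\rho,\pmb{H}^c)$, and the penalty grows with $b$, reflecting the rate-versus-complexity tension. I expect the main obstacle to be the third step --- pinning down the bad region precisely enough that both $r_{\rm eff}$ and $r_{\rm pack}$ appear with the right exponents in $\alpha$ (not merely a bounded-distance-type bound from worst-casing, nor so tight that the averaging breaks), and verifying that the Stack/Fano bookkeeping and the non-Gaussian-but-uncorrelated MMSE-DFE residual do not perturb the region to the relevant order, so that the same decoding radius --- and hence (\ref{fano_rate}) --- governs both algorithms.
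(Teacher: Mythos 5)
Your proposal is correct and follows essentially the same route as the paper: bounding the error event by the necessary condition $\mu(\pmb{z}')>\mu_{\min}$ along the correct path, reducing it (via $-(\mu_{\min}+|\pmb{e}'|^2)\leq 0$ and $|\pmb{B}\pmb{x}|\geq 2r_{\rm pack}(\pmb{B}\pmb{G})$) to a lattice-decoding event for a shrunken effective channel, invoking the Loeliger/Minkowski--Hlawka ambiguity decoder of Lemma 1, and extracting the rate from the quadratic $\beta^2=\beta+2\alpha$ whose root is $(1+\sqrt{1+8\alpha})/2$, with the $2M\log\beta$ offset coming from $m=2MT$. The only differences are cosmetic (e.g., $b(m-1)$ versus $bm$ in the bad-region description, and phrasing the loss as a scaling of the fundamental region rather than of $\pmb{B}$), and the "main obstacle" you flag is precisely the chain of volume inequalities relating $(2r_{\rm pack})^2/m$ to $2^{[R_{\rm LAST}-R]/M}(2r_{\rm pack}/r_{\rm eff})^2$ that the paper carries out.
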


It should be noted that the above theorem applies to the general linear Gaussian vector channel model that is described in (\ref{linear_model}) with arbitrary $\pmb{B}$. As an example, consider again the lattice coded AWGN (non-fading) channel with $\pmb{B}=\sqrt{1+\rho}\pmb{I}_m$ under the use of MMSE-DFE lattice sequential decoding. As $m\rightarrow\infty$, for a well-constructed lattice code ensamble\footnote{Codes that are constructed using lattices that satisfy the Minkowski-Hlawka theorem (see \cite{Loe}--\cite{Polytrev} for more details)}, it is well-known that\footnote{As discussed in \cite{EZ2}, the packing efficiency $\eta_{\rm pack}$, defined as ${r_{\rm pack}(\pmb{G})/ r_{\rm eff}(\pmb{G})}$, of a well-constructed lattice ensemble $\Lambda(\pmb{G})$ is asymptotically (as $m\rightarrow\infty$) bounded by $0.5\leq \eta_{\rm pack}\leq 0.66$.}
$${2r_{\rm pack}(\pmb{B}\pmb{G})\over r_{\rm eff}(\pmb{B}\pmb{G})}\approx 1  $$
In this case, the achievable rate in (\ref{fano_rate}) reduces simply to
$$R_b(\rho)\approx\log(1+\rho)-2\log\left({1+\sqrt{1+8b}\over 2}\right).$$
As discussed in the introduction, the optimal bias that achieves the best performance while maintaining low complexity decoding was found in \cite{ONM} and is given in (\ref{bias_AWGN}). Substituting this value of $b$ in the above equation with channel noise variance $\sigma^2=0.5$, we get $R_b(\rho)\approx \log(1+\rho)-1.64$, which is approximately 2 dB away from capacity. The achievable rate $R_b$ in this case is very close to the computational cut-off rate $R_0$ of the AWGN channel.

Some special remarks can be made about Theorem 1. First, it is clear that as $b\rightarrow 0$ the achievable rate $R_b\rightarrow R_{\rm LAST}=\log\det(\pmb{I}_M+\rho(\pmb{H}^c)^\mathsf{H}\pmb{H}^c)$. In fact, as will be shown in the sequel, as $b\rightarrow 0$ we achieve the lattice decoder error performance at \textit{any} SNR. As $b$ becomes large, it may not be possible to transmit data at rates close to $R_{\rm LAST}$. The problem that we may encounter here is that when the channel is near outage, for a fixed value of $b$ we may not be able to send data at a positive rate, especially at low-to-moderate SNR. In this case, it is highly likely that the decoder will perform erroneous detection. However, as $\rho\rightarrow \infty$, for \textit{any} fixed $b$, the offset term that appears in the achievable rate equation becomes negligible and at very high SNR we have $R_b\approx R_{\rm LAST}$. Therefore, for a fixed bias, although we may not achieve close to lattice decoding performance, one should expect the sequential decoder to achieve the optimal DMT of the channel $d^*_{\rm out}(r)$. This is summarized in the following theorem:

\begin{thm}
There exists a sequence of nested LAST codes with block length $T\geq M+N-1$ that achieves the optimal DMT curve $d^*_{\rm out}(r)=(M-r)(N-r)$ for all $r\in [0,\min\{M,N\}]$ under LAST coding and MMSE-DFE lattice sequential decoding for any fixed bias $b\geq0$.
\end{thm}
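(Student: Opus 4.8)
The plan is to derive Theorem~2 from Theorem~1 together with the outage-exponent computation of \cite{ZT} and an ensemble (random-coding) argument. Fix a multiplexing gain $r\in[0,\min\{M,N\}]$ and let the code rate scale as $R=r\log\rho$. Since, by the Zheng--Tse converse, $d^*_{\rm out}(r)$ upper bounds the diversity gain of \emph{any} coding/decoding scheme over this channel, it suffices to exhibit a sequence of nested LAST codes of block length $T\ge M+N-1$ for which the word error probability under MMSE-DFE lattice sequential decoding with fixed bias $b$ obeys $P_e(\rho)\limi{\le}\rho^{-d^*_{\rm out}(r)}$.

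The first observation is that the bias enters the achievable rate only as a \emph{constant}. Specializing (\ref{fano_rate}) to the MIMO channel, where $\pmb{B}$ is the MMSE-DFE feedback matrix so that $\log\det(\pmb{B}^\mathsf{T}\pmb{B})^{1/2T}=\log\det(\pmb{I}_M+\rho(\pmb{H}^c)^\mathsf{H}\pmb{H}^c)$ by (\ref{LAST_rate}), Theorem~1 gives $R_b(\pmb{H}^c,\rho)=\max\{\log\det(\pmb{I}_M+\rho(\pmb{H}^c)^\mathsf{H}\pmb{H}^c)-\Gamma(b),\,0\}$ with $\Gamma(b)=2M\log\big((1+\sqrt{1+8\alpha})/2\big)$ and $\alpha$ as in (\ref{alpha}). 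For the Minkowski--Hlawka nested-lattice ensemble the packing efficiency satisfies $0.5\le\eta_{\rm pack}\le0.66$ uniformly, and the generator can be chosen so that $\Lambda(\pmb{B}\pmb{G})$ keeps bounded packing efficiency for all $\rho$; hence $\alpha$ stays bounded and $\Gamma(b)$ is a fixed constant, independent of $\rho$, i.e.\ $\Gamma(b)\limi{=}\rho^{0}$.

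Consequently, with the outage event $\mathcal{O}=\{R_b(\pmb{H}^c,\rho)<R\}=\{\log\det(\pmb{I}_M+\rho(\pmb{H}^c)^\mathsf{H}\pmb{H}^c)<r\log\rho+\Gamma(b)\}$, the $o(\log\rho)$ offset leaves the SNR exponent unchanged and $\outage{R}=\Pr(\mathcal{O})\limi{=}\Pr\big(\log\det(\pmb{I}_M+\rho(\pmb{H}^c)^\mathsf{H}\pmb{H}^c)<r\log\rho\big)\limi{=}\rho^{-d^*_{\rm out}(r)}$ by the eigenvalue/Laplace-method computation of \cite{ZT}, with $d^*_{\rm out}(r)=(M-r)(N-r)$. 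Splitting $P_e\le\Pr(\mathcal{O})+\Pr(\mathcal{E}\cap\mathcal{O}^{c})$, it remains to make the joint term negligible on the SNR scale. Averaging over the random nested-lattice ensemble with the MMSE-DFE front end, and using that on $\mathcal{O}^{c}$ one has $R<R_b(\pmb{H}^c,\rho)$ strictly, the ambiguity-decoder/Fano-metric bound from the proof of Theorem~1 controls the ensemble-average conditional error probability; integrating it against the no-outage channel statistics (again via the eigenvalue distribution of \cite{ZT}) gives $\mathbb{E}_{\rm ens}\,\Pr(\mathcal{E}\cap\mathcal{O}^{c})\limi{\le}\rho^{-d^*_{\rm out}(r)}$, exactly as in the lattice-decoding DMT proof of \cite{GCD}, provided $T\ge M+N-1$. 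By the probabilistic method there is a deterministic sequence of nested LAST codes with $P_e(\rho)\limi{\le}\rho^{-d^*_{\rm out}(r)}$, which together with the converse proves the theorem for every fixed $b\ge0$.

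The main obstacle is the joint bound $\Pr(\mathcal{E}\cap\mathcal{O}^{c})\limi{\le}\rho^{-d^*_{\rm out}(r)}$ for the \emph{sequential} (tree-search) decoder: unlike ML or naive lattice decoding, the Fano/Stack decoder can commit to a wrong path early, so one must verify that a \emph{fixed} bias $b$ creates no error events beyond the outage-type ones. This is what the proof of Theorem~1 supplies --- the decoded point coincides with the ambiguity decoder's output whenever the MMSE-DFE residual noise stays inside a ball of radius tied to $r_{\rm pack}(\pmb{B}\pmb{G})$, and escaping that ball has vanishing probability once $R<R_b(\pmb{H}^c,\rho)$ --- so what is left is to track the SNR exponent of that probability after integrating over the MMSE-DFE noise and the no-outage channel states, and to confirm that the constant $\Gamma(b)$ does not degrade it. For the Fano algorithm one further notes that, for small step size $\delta$, the additional revisited nodes do not change the error events, so the same exponent holds.
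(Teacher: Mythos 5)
Your proposal is correct and follows the paper's overall architecture: reduce the fixed bias to a constant rate offset $\Gamma(b)$ that is washed out by $\log\rho$ in the outage exponent, split $P_e\leq \Pr(\mathcal{O})+\Pr(\mathcal{E}\cap\mathcal{O}^c)$, average over the Minkowski--Hlawka ensemble, and invoke $T\geq M+N-1$ before extracting a deterministic code sequence. The one place you diverge is in how the conditional error term on $\overline{\mathcal{O}}$ is controlled. You route it through the ambiguity decoder of Lemma~1 (i.e., the machinery of the proof of Theorem~1, with the shrunken matrix $\pmb{B}'$ and the decision region $\mathcal{E}'_{T,\gamma}$), whereas the paper's Appendix~I proof of Theorem~2 never touches the ambiguity decoder: it starts from the union bound over $\{\mu(\pmb{z})>\mu_{\min}\}$, uses $-(\mu_{\min}+|\pmb{e}'|^2)\leq 0$ to strip out the dependence on $\mu_{\min}$, bounds the resulting pairwise events $\{|\pmb{Bx}|^2-2(\pmb{Bx})^\mathsf{T}\pmb{e}<bm\}$ by a Chernoff bound (together with the density domination $f_{\pmb{e}}\leq\beta_m f_{\tilde{\pmb{e}}}$ to handle the non-Gaussianity of the MMSE-DFE residual), and converts the sum over lattice points to an integral via Loeliger's averaging theorem, yielding $\Pr(E_s|\pmb{\nu})\limi{\leq}\rho^{-T[\sum_j(1-\nu_j)^+-r]}$ directly. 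Your route is the one the paper actually uses for the variable-bias Theorem~4 in Appendix~II, so it is sound here as the fixed-bias special case; what the paper's direct Chernoff route buys is that it sidesteps the achievable-rate formula entirely for the error-probability half, and hence never needs your auxiliary claim that the packing efficiency of the channel-transformed lattice $\Lambda(\pmb{B}\pmb{G})$ stays bounded for all channel realizations---a claim that is delicate when $\pmb{H}^c$ is near-singular and that you should either justify or avoid by adopting the direct union-bound computation on the non-outage set.
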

\begin{proof}
 (Sketch) For a fixed bias, we have $R_b\approx R_{\rm LAST} - \gamma$ where $\gamma$ is a constant that depends on $b$. Let $R(\rho)=r\log\rho$ where $r$ is the multiplexing gain, and denote $0\leq\lambda_1\leq\cdots\leq\lambda_M$ the eigenvalues of $(\pmb{H}^c)^\mathsf{H}{\pmb{H}^c}$. Let $\alpha_i\triangleq-\log\lambda_i/\log\rho$. Using the definition of the outage probability in \cite{ZT}, we have
\begin{equation}
\begin{split}
P_{\rm out}(\rho,R)=\Pr(R(\rho)>R_b(\rho))&=\Pr(r\log\rho> R_{\rm LAST}(\rho)-\gamma)\\
&\limi{=}\Pr\left(r\log\rho\geq \sum\limits_{i=1}^M(1-\alpha_i)^+\log\rho-\gamma\right)\\
                                          &=\Pr\left(r> \sum\limits_{i=1}^M(1-\alpha_i)^+-{\gamma\over \log\rho}\right)\\
                                          &\limi{=} \Pr\left(r> \sum\limits_{i=1}^M(1-\alpha_i)^+\right)
                                                                                    \end{split}
\end{equation}
The last equation represents the achievable outage probability under ML (or MMSE-DFE lattice) decoding (see \cite{ZT} or \cite{GCD}). Therefore, at high SNR we have $P_{\rm out}(\rho,r\log\rho)\limi{=}\rho^{-d_{\rm out}^*(r)}$. According to \cite{ZT}, the outage probability serves as a lower bound for the probability of decoding error. Therefore, we have
\begin{equation}
P_e(\rho)\limi{\geq} \rho^{-d_{\rm out}^*(r)}.
\end{equation}
The proof is completed if we show that $P_e(\rho)$ is asymptotically upper bounded by $\rho^{-d_{\rm out}^*(r)}$. This is done in Appendix I.
\end{proof}

The above theorem indicates that the use of the ML or near-ML receivers (e.g., lattice decoders) is not essential if the main goal is to achieve the optimal tradeoff of the channel. Sub-optimal receivers may do the job. This result agrees with the work by Jald\'{e}n and Elia \cite{EJ} where they considered the use of lattice aided-reduction linear decoders and proved their optimality in the DMT sense. These decoders, although achieve very low decoding complexity, suffers from a large SNR gap from the ML or lattice decoding error performance. However, lattice sequential decoders allow for a \textit{systematic} approach for trading off performance for complexity. Using a fixed but large value of $b$, although achieves the optimal DMT, the performance (SNR) gap from the ML or the lattice decoder increases as $b$ becomes large. To achieve near-ML performance in this case, one has to resort to \textit{low} fixed values of $b$.

A natural question that may be asked at this point is: how large $b$ can be set in order not to loose the optimal tradeoff? For fixed (finite) $b$, one cannot catch, in general, the effect of the bias term on the DMT achieved by such decoding scheme. As will be shown in the sequel, in order to do that, we need to allow the bias term to vary according to the channel condition and the SNR.

Now, before proving Theorem 1, we would like to introduce the so called \textit{ambiguity decoder}. The lattice ambiguity decoder was originally developed by Loeliger in \cite{Loe}, and was used in \cite{GCD} to prove the achievability rate of the MMSE-DFE lattice decoder that is given in (\ref{LAST_rate}). The same technique will be used in this paper to derive the achievable rate under MMSE-DFE lattice sequential decoding.

Assume the received vector can be written as $\pmb{y}=\pmb{x}+\pmb{w}$, where $\pmb{x}\in\Lambda_c$ and $\pmb{w}=\pmb{A}^{-1}\pmb{e}$ is an $m$-dimensional noise vector independent of $\pmb{x}$, for which $\pmb{A}\in\mathbb{R}^{m\times m}$ is an arbitrary full-rank matrix and $\pmb{e}\sim\mathcal{N}(\pmb{0},0.5\pmb{I})$. The ambiguity decoder is defined by a decision region $\mathcal{E}\subset\mathbb{R}^m$ and outputs ${\pmb{x}}\in\Lambda_c$ if $\pmb{y}\in\mathcal{E}+\pmb{x}$ and there exists no other point $\pmb{x}'\in\Lambda_c$ such that $\pmb{y}\in\mathcal{E}+\pmb{x}'$. An ambiguity occurs if the received vector $\pmb{y}\in\{\mathcal{E}+{\pmb{x}}\}\cap\{\mathcal{E}+\pmb{x}'\}$ for some $\pmb{x}\neq\pmb{x}'$. If we define  $\mathcal{A}(\mathcal{E})$ to be the ambiguity event for the decision region $\mathcal{E}$, then for a given $\Lambda_c$ and $\mathcal{E}$, the probability of error can be upper bounded  as
\begin{equation}\label{AMBUB}
P_e(\mathcal{E}|\Lambda_c)\leq {\rm Pr}(\pmb{e}\notin \mathcal{E}) + {\rm Pr}(\mathcal{A}(\mathcal{E})).
\end{equation}
As mentioned in \cite{Loe}, the upper bound (\ref{AMBUB}) holds for any Jordan measurable bounded subset $\mathcal{E}$ of $\mathbb{R}^m$.
Consider now the following lemma:
\begin{lem}
There exists an $m=2MT$-dimensional lattice code $\mathcal{C}(\Lambda_c,\pmb{u}_0,\mathcal{R})$ with fundamental volume $V_c$ that satisfies (\ref{PC}), for some fixed translation vector $\pmb{u}_0$, and $\mathcal{R}$ is the $m/2$-dimensional hypersphere with radius $\sqrt{MT}$ centered at the origin such that the error probability is upper bounded as
\begin{equation}\label{Pe1}
P_e(\Lambda_c,\mathcal{E}_{T,\gamma})\leq (1+\epsilon')2^{-T[\log\det(\pmb{A}^\mathsf{T}\pmb{A})^{1/2T}-M\log(2r_e^2/m)-R]} +{\rm Pr}(\pmb{e}\notin \mathcal{E}_{T,\gamma}),
\end{equation}
where $\mathcal{E}_{T,\gamma}\triangleq\{\pmb{z}\in\mathbb{R}^{2MT}:\pmb{z}^\mathsf{T}\pmb{A}^{\mathsf{T}}\pmb{A}\pmb{z}\leq r_e^2(1+\gamma)\}$, $r_e>0$, $\gamma>0$, and  $\epsilon'>0$.
\end{lem}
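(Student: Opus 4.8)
The plan is to prove Lemma~1 by a random-coding argument over an ensemble of lattices in the spirit of Loeliger~\cite{Loe} and~\cite{GCD}, combined with the ambiguity-decoder bound~(\ref{AMBUB}) applied to the specific decision region $\mathcal{E}_{T,\gamma}$. First I fix the shaping region to be $\mathcal{R}=\mathcal{S}_m(\sqrt{MT})$ with $m=2MT$, so that every point of $\mathcal{C}=\{\Lambda_c+\pmb{u}_0\}\cap\mathcal{R}$ has squared norm at most $MT$ and the power constraint~(\ref{PC}) holds automatically for \emph{any} choice of $\Lambda_c$ and $\pmb{u}_0$. I then draw $\Lambda_c$ from the Minkowski--Hlawka ensemble of lattices with fundamental volume $V_c$, and $\pmb{u}_0$ uniformly over a fundamental cell, and I choose $V_c\triangleq V(\mathcal{R})\,2^{-TR}$ so that $\mathsf{E}_{\pmb{u}_0}\{|\mathcal{C}|\}=V(\mathcal{R})/V_c=2^{TR}$; this is the normalization that forces $R=\tfrac1T\log|\mathcal{C}|$ on average.

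Next, applying~(\ref{AMBUB}) with $\mathcal{E}=\mathcal{E}_{T,\gamma}$ and noting that, since $\pmb{A}\pmb{w}=\pmb{e}$, the event $\pmb{w}\in\mathcal{E}_{T,\gamma}$ is exactly $|\pmb{e}|^2\le r_e^2(1+\gamma)$, one gets for every transmitted codeword $P_e(\Lambda_c,\mathcal{E}_{T,\gamma})\le\Pr(\pmb{e}\notin\mathcal{E}_{T,\gamma})+\Pr(\mathcal{A}(\mathcal{E}_{T,\gamma}))$, where the second term is the same for all codewords and depends only on $\Lambda_c$. For the ambiguity term I use the union bound over nonzero lattice points in its \emph{intersection} form, $\Pr(\mathcal{A}(\mathcal{E}_{T,\gamma}))\le\sum_{\pmb{\lambda}\in\Lambda_c\setminus\{\pmb{0}\}}\Pr\bigl(\pmb{w}\in\mathcal{E}_{T,\gamma}\cap(\mathcal{E}_{T,\gamma}+\pmb{\lambda})\bigr)$, and I average over the ensemble. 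The function $f(\pmb{x})\triangleq\Pr(\pmb{w}\in\mathcal{E}_{T,\gamma}\cap(\mathcal{E}_{T,\gamma}+\pmb{x}))$ is nonnegative, bounded and supported on the bounded Jordan-measurable set $\mathcal{E}_{T,\gamma}-\mathcal{E}_{T,\gamma}$, so the Minkowski--Hlawka averaging lemma of~\cite{Loe} applies and gives $\mathsf{E}_{\Lambda_c}\{\sum_{\pmb{\lambda}\ne\pmb{0}}f(\pmb{\lambda})\}\le(1+\epsilon)V_c^{-1}\int_{\mathbb{R}^m}f(\pmb{x})\,d\pmb{x}$ for any $\epsilon>0$; a Fubini swap---integrating out $\pmb{x}$ first, which contributes a factor $V(\mathcal{E}_{T,\gamma})$---collapses the integral to $V(\mathcal{E}_{T,\gamma})\Pr(\pmb{w}\in\mathcal{E}_{T,\gamma})\le V(\mathcal{E}_{T,\gamma})$, whence $\mathsf{E}_{\Lambda_c}\{\Pr(\mathcal{A}(\mathcal{E}_{T,\gamma}))\}\le(1+\epsilon)V(\mathcal{E}_{T,\gamma})/V_c$.

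It then remains to evaluate the volumes and pass to a single good code. Writing the ellipsoid volume as $V(\mathcal{E}_{T,\gamma})=(\pi r_e^2(1+\gamma))^{m/2}\bigl(\Gamma(m/2+1)\sqrt{\det(\pmb{A}^\mathsf{T}\pmb{A})}\bigr)^{-1}$ and $V(\mathcal{R})=(\pi MT)^{m/2}/\Gamma(m/2+1)$, substituting $V_c=V(\mathcal{R})2^{-TR}$ and using $m/2=MT$ gives
\[
\frac{V(\mathcal{E}_{T,\gamma})}{V_c}=(1+\gamma)^{MT}\,2^{-T\left[\log\det(\pmb{A}^\mathsf{T}\pmb{A})^{1/2T}-M\log(2r_e^2/m)-R\right]}.
\]
To obtain one code, I select (as provided by the Minkowski--Hlawka lemma) a lattice $\Lambda_c$ for which $\Pr(\mathcal{A}(\mathcal{E}_{T,\gamma}))$ does not exceed $(1+\epsilon)V(\mathcal{E}_{T,\gamma})/V_c$; for that lattice this quantity is already a fixed number, so I may still pick $\pmb{u}_0$ achieving $|\mathcal{C}|\ge2^{TR}$ (passing to a subcode if $|\mathcal{C}|$ comes out strictly larger, which only shrinks the relevant ambiguity set). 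Absorbing $(1+\epsilon)(1+\gamma)^{MT}$---which, for fixed $T$, can be made as close to $1$ as desired by taking $\epsilon$ and $\gamma$ small---into a single factor $1+\epsilon'$ yields exactly~(\ref{Pe1}), with $\Pr(\pmb{e}\notin\mathcal{E}_{T,\gamma})$ read as $\Pr(|\pmb{e}|^2>r_e^2(1+\gamma))$.

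The main obstacle is the \emph{joint} control of the code cardinality $|\mathcal{C}|$ and of the error probability: the Minkowski--Hlawka bound controls only an ensemble average, and the two-step expurgation must be arranged so that forcing $|\mathcal{C}|\ge2^{TR}$ does not spoil the earlier choice that made $\Pr(\mathcal{A}(\mathcal{E}_{T,\gamma}))$ small. This decouples cleanly here because $\Pr(\mathcal{A}(\mathcal{E}_{T,\gamma}))$ depends only on $\Lambda_c$---not on the translate $\pmb{u}_0$ nor on which codeword is sent---and because enlarging the code can only decrease the ambiguity event. A secondary technical point is keeping $f$ of bounded support (which is precisely why the intersection form of the ambiguity event, rather than the cruder $\{\pmb{w}\in\mathcal{E}_{T,\gamma}+\pmb{\lambda}\}$, must be the one carried through the Minkowski--Hlawka step), together with the routine bookkeeping that merges the geometric factor $(1+\gamma)^{MT}$ into $1+\epsilon'$.
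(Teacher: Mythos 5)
Your proposal is correct and follows essentially the argument the paper intends: the paper's own ``proof'' of this lemma is just a citation to \cite{GCD}, and your construction (spherical shaping region making (\ref{PC}) automatic, Minkowski--Hlawka ensemble with $V_c=V(\mathcal{R})2^{-TR}$, ambiguity bound (\ref{AMBUB}) with the intersection-form union bound, Fubini to get $V(\mathcal{E}_{T,\gamma})/V_c$, then the two-step expurgation exploiting that $\Pr(\mathcal{A})$ is independent of $\pmb{u}_0$) is precisely that reference's proof. The only bookkeeping worth noting is that your prefactor is $(1+\epsilon)(1+\gamma)^{MT}$ rather than a $\gamma$-free $1+\epsilon'$, which is harmless here since $\gamma$ is fixed and the residual $M\log(1+\gamma)$ rate loss vanishes as $\gamma\to 0$, consistent with how the lemma is used after (\ref{Pe1}).
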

\begin{proof}
See \cite{GCD}.
\end{proof}

The achievable rate under MMSE-DFE lattice decoding provided in (\ref{LAST_rate}) follows easily by letting $\pmb{A}=\pmb{B}$ and $r_e^2=MT$ in the above lemma. In that case, from the standard typicality arguments it follows that for any $\epsilon>0$ and $\gamma>0$, there exists $T_{\gamma,\epsilon}$ such that for all $T>T_{\gamma,\epsilon}$ we have that ${\rm Pr}(\pmb{e}\notin \mathcal{E}_{T,\gamma})<\epsilon/2$. The second term in the upper bound (\ref{Pe1}) can be made smaller than $\epsilon/2$ for sufficiently large $T$ if $R<R_{\rm LAST}$.

\subsection{Proof of Theorem 1}
\begin{proof}
Consider an $m=2MT$-dimensional lattice code $\mathcal{C}(\Lambda_c,\pmb{u}_0,\mathcal{R})$ (that corresponds to a generated matrix $\pmb{G}$) with fundamental volume $V_c$ that satisfies (\ref{PC}), for some fixed translation vector $\pmb{u}_0$, and $\mathcal{R}$ is the $m/2$-dimensional hypersphere with radius $\sqrt{m/2}$ centered at the origin.

The input to the MMSE-DFE lattice sequential decoder is the vector $\pmb{y}'=\pmb{Q}^\mathsf{T}{\pmb{y}}$, where $\pmb{Q}$ is an orthogonal matrix that corresponds to the QR decomposition of the channel-code matrix $\pmb{B}\pmb{G}=\pmb{Q}\pmb{R}$. The associated path metric in this case is given by (\ref{mod_metric}).

Consider the Stack algorithm with bias $b\geq 0$. Let $E_s$ be the event that the Stack decoder makes an erroneous detection. Due to lattice symmetry, we can assume that the all zero lattice point, i.e., $\pmb{0}$, was transmitted. For a given lattice $\Lambda_c$, the frame error rate of the lattice Stack sequential decoder.
 \begin{equation}\label{UB200}
 \begin{split}
\Pr(E_s|\Lambda_c)&\stackrel{(a)}{\leq}{\rm Pr}\left(\bigcup_{\pmb{z}\in\mathbb{Z}^m\backslash\{\pmb{0}\}}\{\mu(\pmb{z})>\mu_{\min}\}\right)\\
&\stackrel{(b)}{\leq}{\rm Pr}\left(\bigcup_{\pmb{x}\in\Lambda_c^*}\{|\pmb{Bx}|^2-2(\pmb{Bx})^{\mathsf{T}}\pmb{e}<bm\}\right)\\
&={\rm Pr}\left(\bigcup_{\pmb{x}\in\Lambda_c^*}\left\{{2(\pmb{Bx})^\mathsf{T}\pmb{e}}\geq{|\pmb{Bx}|^2}\left(1-{bm\over |\pmb{Bx}|^2}\right)\right\}\right),
\end{split}
\end{equation}
where $\Lambda_c^*=\Lambda_c\backslash\{\pmb{0}\}$, $\mu_{\min}=\min\{0,b-|{\pmb{e}'}_1^1|^2,2b-|{\pmb{e}'}_1^2|^2,\ldots,bm-|{\pmb{e}'}_1^m|^2\}$ is the minimum metric that corresponds to the transmitted path, $\pmb{e}'=\pmb{Q}^\mathsf{T}\pmb{e}$, $(a)$ is due to the fact that in general, $\mu(\pmb{z})>\mu_{\min}$ is just a necessary condition for $\pmb{x}=\pmb{G}\pmb{z}$ to be decoded by the Fano decoder, and $(b)$ follows by noticing that $-(\mu_{\min}+|{\pmb{e}'}|^2)\leq 0$. It is clear from the above analysis that lattice Stack sequential decoder approaches the performance of lattice decoder as $b\rightarrow 0$. We make use of the fact that
$$|\pmb{Bx}|^2\geq \min_{\pmb{x}\in\Lambda_c^*}|\pmb{B}\pmb{x}|^2= (2r_{\rm pack}(\pmb{B}\pmb{G}))^2,$$
where $r_{\rm pack}(\pmb{B}\pmb{G})$ is the packing radius of the lattice $\Lambda(\pmb{B}\pmb{G})$. Let $b=b'(2r_{\rm pack}(\pmb{B}\pmb{G}))^2/m$, where $0\leq b'\leq 1$ is a constant independent of the lattice $\Lambda(\pmb{BG})$. Then, we can further upper bound (\ref{UB200}) as
 \begin{eqnarray}\label{UB300}
\Pr(E_s|\Lambda_c)&\leq&{\rm Pr}\left(\bigcup_{\pmb{x}\in\Lambda_c^*}\left\{{2(\pmb{B}'\pmb{x})^\mathsf{T}\pmb{e}}\geq{|\pmb{B}'\pmb{x}|^2}\right\}\right),
\end{eqnarray}
where
\begin{equation}\label{newB}
\pmb{B}'=\left(1-b' \right)\pmb{B}.
\end{equation}
The RHS of the upper bound (\ref{UB300}) corresponds to the probability of decoding error of a received signal $\pmb{y}=\tilde{\pmb{B}}\pmb{x}+\pmb{e}$ decoded using lattice decoding. It is clear from (\ref{newB}) that $\pmb{B}'$ is invertible. In this case, we obtain the equivalent channel output
$$\tilde{\pmb{y}}={\pmb{B}'}^{-1}\pmb{y}'=\pmb{x}+\tilde{\pmb{e}}.$$
Next, we apply the ambiguity decoder with decision region
\begin{equation}
\mathcal{E}'_{T,\gamma}\triangleq\left\{\pmb{z}\in\mathbb{R}^m:\pmb{z}^\mathsf{T}{\pmb{B}'}^\mathsf{T}\pmb{B}'\pmb{z}\leq MT(1+\gamma)\right\}.
\end{equation}
The probability of making a decoding error can then be upper bounded by
\begin{equation}\label{UB400}
\Pr(E_s|\Lambda_c)\leq {\rm Pr}(\tilde{\pmb{e}}\in\mathcal{E}'_{T,\gamma})+{\rm Pr}(\mathcal{A}(\mathcal{E}'_{T,\gamma})).
\end{equation}

Applying Lemma 1, there exists a lattice $\Lambda^\star_c$, and a translation vector $\pmb{u}_0^\star$ with error probability satisfying
\begin{equation}\label{amb_2}
\Pr(E_s,\Lambda^\star_c)\leq (1+\epsilon')2^{-T[\log\det({\pmb{B}'}^\mathsf{T}\pmb{B}')^{1/2T}-R]} +{\rm Pr}(\pmb{e}\notin \mathcal{E}'_{T,\gamma}).
\end{equation}

Now, one can show that
\begin{align}
(2r_{\rm pack}(\pmb{B}\pmb{G}))^2/m&= 2^{R_{\rm LAST}/M}{\Gamma(m/2+1)^{2/m}V(\mathcal{S}_m(2r_{\rm pack}(\pmb{B}\pmb{G})))^{2/m}\over m\pi 2^{R_{\rm LAST}/M}}\cr
&\stackrel{(a)}{=}2^{R_{\rm LAST}/M}{V(\mathcal{S}_m(2r_{\rm pack}(\pmb{B}\pmb{G})))^{2/m}\over 2V(\mathcal{R})^{2/m} \det(\pmb{B}^\mathsf{T}\pmb{B})^{2/m}}\cr
&={2^{R_{\rm LAST}/M}\over 2}{V_c^{2/m}\over V(\mathcal{R})^{2/m}}{V(\mathcal{S}_m(2r_{\rm pack}(\pmb{B}\pmb{G})))^{2/m}\over V_c^{2/m}V(\mathcal{V}(\pmb{B}))^{2/m}}\cr
&\stackrel{(b)}{\geq}{2^{[R_{\rm LAST}-R]/M}\over 2}{V(\mathcal{S}_m(2r_{\rm pack}(\pmb{B}\pmb{G})))^{2/m}\over  V(\mathcal{V}(\pmb{B}\pmb{G}))^{2/m}}\cr
&\stackrel{(c)}{=}{2^{[R_{\rm LAST}-R]/M}\over 2}{V(\mathcal{S}_m(2r_{\rm pack}(\pmb{B}\pmb{G})))^{2/m}\over  V(\mathcal{S}_m(r_{\rm eff}(\pmb{B}\pmb{G})))^{2/m}}={2^{[R_{\rm LAST}-R]/M}\over 2}{\left(2r_{\rm pack}(\pmb{B}\pmb{G})\over  r_{\rm eff}(\pmb{B}\pmb{G})\right)^2}
\end{align}
where $(a)$ follows from the fact that $V(\mathcal{R})$ is the volume of the $m$-dimensional hypersphere of radius $\sqrt{m/2}$, $(b)$ follows from the fact that for the shifted lattice code used in Lemma 1 we have
$${V_c^{2/m}\over V(\mathcal{R})^{2/m}}\geq 2^{-R/M}$$
and $(c)$ follows from the definition of the effective radius of the lattice generated using the matrix $\pmb{B}\pmb{G}$. Therefore, we can further upper bound (\ref{amb_2}) as
\begin{equation}\label{amb_3}
\Pr(E_s,\Lambda^\star_c)\leq (1+\epsilon')2^{-T[\log\det(\tilde{\pmb{B}}^\mathsf{T}\tilde{\pmb{B}})^{1/2T}-R]} +{\rm Pr}(\pmb{e}\notin \mathcal{E}'_{T,\gamma}).
\end{equation}
where $\tilde{\pmb{B}}$ is given by
\begin{equation}\label{newB2}
\tilde{\pmb{B}}=\left(1-{2b\over 2^{[R_{\rm LAST}-R]/M}(2r_{\rm pack}(\pmb{B}\pmb{G})/ r_{\rm eff}(\pmb{B}\pmb{G}))^2}\right)\pmb{B},
\end{equation}
which is valid for all values of $b< 2^{[R_{\rm LAST}-R-1]}(2r_{\rm pack}(\pmb{B}\pmb{G}_c)/r_{\rm eff}(\pmb{B}\pmb{G}_c))^2 $. Noticing that
\begin{align}
\det\left({\tilde{\pmb{B}}}^\mathsf{T}\tilde{\pmb{B}}\right)^{1/2T}&=\left(1-{2b\over 2^{[R_{\rm LAST}-R]/M}(2r_{\rm pack}(\pmb{B}\pmb{G}_c)/r_{\rm eff}(\pmb{B}\pmb{G}_c))^2}\right)^{2m/2T}\det\left({\pmb{B}}^\mathsf{T}\pmb{B}\right)^{1/2T}\cr
&=\left(1-{2b\over 2^{[R_{\rm LAST}-R]/M}(2r_{\rm pack}(\pmb{B}\pmb{G}_c)/r_{\rm eff}(\pmb{B}\pmb{G}_c))^2}\right)^{2M}\det(\pmb{I}_M+\rho(\pmb{H}^c)^\mathsf{H}\pmb{H}^c),
\end{align}
and by solving for $R$, we achieve the desired result.
\end{proof}

As discussed earlier, choosing a fixed but not very large values of $b$ may result in achieving the optimal DMT of the channel. However, lattice sequential decoders are used as an alternative to ML and lattice decoders to achieve very low decoding complexity and to do so one has to resort to large values of $b$. As will be shown in the sequel, choosing large values of $b$ may lead to a loss in the diversity gain and/or the multiplexing gain, and as a result, a loss in the optimal tradeoff.

\subsection{Achievable DMT: Variable Bias Term}
Our goal in this section is to derive the achievable outage performance for a general (not necessary fixed) bias term $b$. Denote $0\leq\lambda_1\leq\cdots\leq\lambda_M$ the eigenvalues of $(\pmb{H}^c)^\mathsf{H}{\pmb{H}^c}$. Consider $b$ as a function of $\rho$ and $\pmb{\lambda}=(\lambda_1,\cdots,\lambda_M)$, and express it as
\begin{equation}\label{bias_SNR}
b(\pmb{\lambda},\rho)={1\over 2}{\prod_{i=1}^{M}(1+\rho\lambda_i)^{1/M}\over \eta(\pmb{\lambda},\rho)^{1/M}}\left[1-\left({\eta(\pmb{\lambda},\rho)\over \prod\limits_{i=1}^{M}(1+\rho\lambda_i)}\right)^{1/2M}\right]\left(2r_{\rm pack}(\pmb{B}\pmb{G}_c)\over r_{\rm eff}(\pmb{B}\pmb{G}_c)\right)^2.
\end{equation}
Then, one can easily show that by substituting $b$ in (\ref{alpha}), we get
\begin{equation}
R_b(\pmb{\lambda},\rho)=\log\eta(\pmb{\lambda},\rho).
\end{equation}

To fully characterize the achievable outage performance of lattice sequential decoders as a function of the bias, we allow $b$ to vary with SNR and the channel eigenvalues as in (\ref{bias_SNR}). We define the outage event under lattice sequential decoding as
$$\mathcal{O}_b(\rho)\triangleq\{\pmb{H}^c: R_b(\pmb{H}^c,\rho)<R\}.$$
Denote $R=r\log\rho$. The probability that the channel is in outage, $P_{\rm out}(\rho,b)={\rm Pr}(\mathcal{O}_b(\rho))$, can be evaluated as follows:
  \begin{eqnarray}
  P_{\rm out}(\rho,b)={\rm Pr}(\log\eta(\pmb{\lambda},\rho)<R).
  \end{eqnarray}
The term $\eta (\pmb{\lambda},\rho) $ can be chosen freely between 1 (zero rate) and  $\prod_{i=1}^{M}(1+\rho\lambda_i)$ (yields the maximum achievable rate under MMSE-DFE lattice decoding). Depending on the value of $\eta(\pmb{\lambda},\rho)$ we obtain different achievable rates and hence different outage performances. However, in our analysis and for the sake of simplicity, we let
\begin{equation}\label{eta_zeta}
\eta (\pmb{\lambda},\rho) =\phi\prod_{i=1}^{M}(1+\rho\lambda_i)^{\zeta_i},
\end{equation}
where $0< \phi\leq 1$ is a constant, and $\zeta_i,\;\forall 1\leq i\leq M$, are constants that satisfy the following two constraints: $\sum_{i=1}^M \zeta_i\leq M$, and $\zeta_1\geq\zeta_2\geq\cdots\geq \zeta_M\geq 0$. For example, setting $\eta(\pmb{\lambda},\rho)=\phi\prod_{i=1}^{M}(1+\rho\lambda_i)$ (i.e., uniform values of $\zeta_i$, $\zeta_i=1$, $\forall i=1,\cdots,M$) we achieve the optimal DMT in the sense that for such choice of $\eta(\pmb{\lambda},\rho)$ we have
$$b={1\over 2}\phi^{-1/M}[1-\phi^{1/2M}],$$
which is fixed. This result agrees with Theorem 2.

Now define $\nu_i\triangleq-\log\lambda_i/\log\rho$, then
\begin{eqnarray}
    P_{\rm out}(\rho,b) &=&{\rm Pr}\left(\log\phi\prod_{i=1}^{M}\left(1+{\rho}\lambda_i\right)^{\zeta_i}<r\log\rho\right)\cr
                                &\limi{=}&{\rm Pr}\left(\sum\limits_{i=1}^{M}\zeta_i(1-\nu_i)^{+}<r\right),
                                 \end{eqnarray}
where $(x)^+=\max\{0,x\}$, and $\log\phi$ can be neglected at the high SNR regime. The typical outage event, at high SNR, can be written as
$$\mathcal{O}_b^+(\zeta_1,\cdots,\zeta_M)\triangleq\left\{\pmb{\nu}\in\mathbb{R}_+^M:\sum\limits_{i=1}^M\zeta_i(1-\nu_i)^+ < r\right\}.$$
In this case, the outage probability can be evaluated as follows:
$$P_{\rm out}(\rho,b)=\int_{\mathcal{O}_b^+(\zeta_1,\cdots,\zeta_M)}f_{\pmb{\nu}}(\pmb{\nu})\;d\pmb{\nu},$$
where $f_{\pmb{\nu}}(\pmb{\nu})$ is the joint probability density function of $\pmb{\nu}$ which, for all $\pmb{\nu}\in\mathcal{O}_b^+(\zeta_1,\cdots,\zeta_M)$, is asymptotically given by \cite{GCD}
\begin{equation}\label{pdf1}
f_{\pmb{\nu}}(\pmb{\nu})\limi{=}\exp\left(-\log(\rho)\sum\limits_{i=1}^{M}(2i-1+N-M)\nu_i\right).
\end{equation}
Applying Varadhan's lemma as in \cite{ZT}, we obtain
$$P_{\rm out}(\rho,b)\limi{=}\rho^{-d_b(r)},$$
where
$$d_b(r)=d(r,\pmb{\zeta})=\inf_{\pmb{\nu}\in\mathcal{O}_b^+(\zeta_1,\cdots,\zeta_M)}\sum\limits_{i=1}^{M}(2i-1+N-M)\nu_i.$$
where $\pmb{\zeta}=(\zeta_1,\cdots,\zeta_M)$. It is clear from the above optimization problem that $d_b(r)$ depends critically on the selected coefficients $\pmb{\zeta}$ (or equivalently $b$). Since $\zeta_i$ are ordered, one can assume without loss of generality of the optimal solution that $1\geq\nu_1\geq\cdots\geq\nu_M\geq0$. The linear optimization problem is therefore equivalent to the following problem
$$\left\{\begin{array}{ll}
\hbox{Minimize}: &\sum\limits_{i=1}^M(2i-1+N-M)\nu_i\\
\hbox{Such that}: &0\leq \nu_i\leq 1\quad\forall i\geq 2\\
&\sum\limits_{i=1}^M{\zeta}_i\nu_i\geq M-r
\end{array}
\right.$$
where $\zeta_i\in[0,M]$. We arrive now to the following results:
\begin{itemize}
\item{\textit{Case 1}:} ($0<\zeta_i< M$, and $\sum_{i=1}^M\zeta_i= M$) We have the following:
\begin{itemize}
\item If $r=0$, the optimal solution is
$$\nu_1^*=\cdots=\nu_M^*=1.$$

\item If $r\neq0$, the optimal solution is
\begin{equation}
\nu_i^*=\min\left[{1\over\zeta_i}\left(\sum\limits_{j=i}^{M}\zeta_j-r\right)^+,1\right]\quad \forall i\geq 1,
\end{equation}
\end{itemize}
and the DMT is given by
\begin{equation}
\begin{split}
d_b(0)&=MN,\\
d(r,\pmb{\zeta})&=\sum\limits_{i=1}^M(2i-1+N-M)\nu_i^*.
\end{split}
\end{equation}
An interesting remark about this DMT is that maximum diversity $d(0,\pmb{\zeta})=MN$ is independent of $\zeta_i,\forall i\geq 1$. Moreover, other than the uniform assignments of $\pmb{\zeta}=(1,\cdots,1)$, the optimal DMT \textbf{cannot} be achieved.
\\
\item{\textit{Case 2}:} ($\zeta_i= 0$ for some $i$) For such choices of $\zeta_i$, it is clear that the optimal DMT is lost, i.e., $d_b(r)<(M-r)(N-r)$ for all $r=0,1,\cdots,M$. The maximum diversity achieved in this scenario can be easily shown to be given by
$$d(0,\pmb{\zeta})=MN-\sum\limits_{i=1}^{M}(2i-1+N-M)\delta(\zeta_i),$$
where $\delta(\zeta_i)=1$ if $\zeta_i=0$ and $0$ otherwise.
\begin{ex}
Consider a LAST coded $M\times N$ MIMO channel under lattice sequential decoding with $\zeta_1=M$, and $\zeta_i=0$ for all $i\neq 1$. In this case, the achievable rate is given by $R_{b}(\rho,\lambda_1)=M\log(1+\rho\lambda_1)$. The asymptotic outage probability can be expressed as
\begin{align}
P_{\rm out}(r\log\rho)\limi{=}\Pr(M\log(1+\rho\lambda_1)<r\log\rho)&=\Pr(\rho\lambda_1<\rho^{r/M}-1)\cr
                                                                   &\limi{=}\Pr(\lambda_1<\rho^{-(1-r/M)})\cr
                                                                   &\limi{=}\rho^{-(N-M+1)(1-r/M)^+},
\end{align}
where we have used the fact that $\Pr(\lambda_1<\epsilon)=\epsilon^{N-M+1}$ as $\epsilon\rightarrow 0$ \cite{MMSE2}. Therefore, for such value of $b$, the best DMT that can be achieved by the decoder is $d_b(r)=(N-M+1)(1-r/M)^+$.
\end{ex}
\end{itemize}
Interestingly, for \textit{Case 1}, one can derive a closed form for the achievable DMT as given in the following theorem:
\begin{thm} The DMT, $d_b(r)$, for an $M$-transmit, $N$-receive antenna coded MIMO Rayleigh channel under MMSE-DFE lattice Fano/Stack sequential decoding with bias $b$ as given in (\ref{bias_SNR}) and coefficients $\zeta_i\in(0,M)$, $\forall 1\leq i\leq M$, with $\sum_i\zeta_i=M$, is the piecewise-linear function connecting the points (r(k),d(k)), $k=0,1,\cdots,M$ where
\begin{equation}\label{DMT2}
\begin{split}
r(0)&=0, \quad r(k)= \sum\limits_{i=M-k+1}^{M}\zeta_i,\;1\leq k\leq M,\\
d(k)&=(M-k)(N-k),  \quad\quad 0\leq k\leq M.
\end{split}
\end{equation}
  \end{thm}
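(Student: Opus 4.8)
The plan is to evaluate the linear program that defines $d_b(r)$ in closed form and then read off its breakpoints. Write $c_i\triangleq 2i-1+N-M$ for the objective weights; recall from the discussion preceding the theorem that
\[
d_b(r)=\min\Bigl\{\, \sum_{i=1}^M c_i\nu_i \;:\; 0\le\nu_i\le 1,\ \ \sum_{i=1}^M\zeta_i\nu_i\ge M-r \,\Bigr\},
\]
that an optimizer may be taken with $1\ge\nu_1\ge\cdots\ge\nu_M\ge 0$, and that the coefficients inherit the ordering $\zeta_1\ge\cdots\ge\zeta_M>0$ from the construction in (\ref{bias_SNR}). The structural fact driving everything is that $c_i$ is strictly increasing in $i$ while $\zeta_i$ is non-increasing and positive, so the marginal cost $c_i/\zeta_i$ of raising $\nu_i$ per unit contributed to the covering constraint is strictly increasing in $i$.

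First I would solve the LP via the greedy (water-filling) structure this induces: to accumulate the required mass $M-r$ in $\sum_i\zeta_i\nu_i$ at least cost one raises $\nu_1$ to its ceiling $1$, then $\nu_2$, and so on, so that exactly one coordinate ends up fractional and all later ones vanish. With $S_i\triangleq\sum_{j=i}^M\zeta_j$ (so $S_1=M$, $S_{M+1}=0$, $S_i$ strictly decreasing) this gives $\nu_i^\star=\min\{1,(S_i-r)^+/\zeta_i\}$, exactly the expression already displayed for \textit{Case 1}. I would certify its optimality either by LP duality --- producing a dual feasible solution supported on the single active box constraint together with the covering constraint and checking complementary slackness --- or, more elementarily, by an exchange argument: if a feasible $\nu$ has $\nu_i<1$ for some $i$ and $\nu_j>0$ for some $j>i$, then lowering $\nu_j$ by $\delta/\zeta_j$ and raising $\nu_i$ by $\delta/\zeta_i$ leaves $\sum_i\zeta_i\nu_i$ unchanged and decreases $\sum_i c_i\nu_i$ by $\delta\,(c_j/\zeta_j-c_i/\zeta_i)>0$; iterating (choosing $\delta$ to saturate a box constraint at each step) drives any feasible point to $\nu^\star$ without ever increasing the objective. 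The same swap applied to an inverted pair $\nu_i<\nu_j$ with $i<j$ is precisely what forces the sorted form of the optimizer quoted above, since there $c_i<c_j$ while $\zeta_i\ge\zeta_j$.

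Next I would localize $r$ in $[r(k),r(k+1)]$, where $r(k)=\sum_{i=M-k+1}^M\zeta_i=S_{M-k+1}$ and hence $r(k+1)=S_{M-k}$. From $\nu_i^\star=\min\{1,(S_i-r)^+/\zeta_i\}$ one checks that $\nu_i^\star=1$ for $i\le M-k-1$ (because then $S_{i+1}\ge S_{M-k}=r(k+1)\ge r$), that $\nu_{M-k}^\star=(r(k+1)-r)/\zeta_{M-k}\in[0,1]$ is the unique fractional coordinate, and that $\nu_i^\star=0$ for $i\ge M-k+1$ (because then $S_i\le S_{M-k+1}=r(k)\le r$). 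Substituting,
\[
d_b(r)=\sum_{i=1}^{M-k-1}c_i+c_{M-k}\,\frac{r(k+1)-r}{\zeta_{M-k}},
\]
which is affine in $r$ on $[r(k),r(k+1)]$. Using $\sum_{i=1}^{j}(2i-1)=j^2$ gives $\sum_{i=1}^{M-k-1}c_i=(M-k-1)^2+(M-k-1)(N-M)=(M-k-1)(N-k-1)$ and $c_{M-k}=M+N-2k-1$; since $r(k+1)-r(k)=\zeta_{M-k}$ the fraction equals $1$ at $r=r(k)$ and $0$ at $r=r(k+1)$, so $d_b(r(k))=(M-k-1)(N-k-1)+(M+N-2k-1)$ and $d_b(r(k+1))=(M-k-1)(N-k-1)$. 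The identity $(M-k-1)(N-k-1)+(M+N-2k-1)=(M-k)(N-k)$ then identifies these endpoint values with $d(k)$ and $d(k+1)$ (and the $k=0$ endpoint recovers $d_b(0)=\sum_{i=1}^M c_i=MN=d(0)$, consistent with the all-ones minimizer noted earlier). Since $\zeta_i>0$ forces $0=r(0)<r(1)<\cdots<r(M)=\sum_i\zeta_i=M$, the intervals $\{[r(k),r(k+1)]\}_{k=0}^{M-1}$ tile $[0,M]$ and $d_b$ is affine on each with the stated endpoint values; therefore $d_b$ is exactly the piecewise-linear function joining $(r(k),d(k))$, $k=0,\dots,M$, as claimed.

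The only genuine obstacle is the second step: pinning down the water-filling form of the minimizer and certifying its optimality cleanly; everything afterward is the arithmetic identity above plus careful tracking of which coordinate is fractional on which sub-interval. A minor point to handle with care is the regime $N<M$, where some $c_i$ may be non-positive --- there one relies on the box constraint $\nu_i\le 1$ (already invoked in the sorted-optimizer reduction) to keep the program bounded.
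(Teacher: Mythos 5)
Your proposal is correct and follows essentially the same route as the paper: the paper states the water-filling minimizer $\nu_i^*=\min\{1,(\sum_{j=i}^M\zeta_j-r)^+/\zeta_i\}$ in \textit{Case 1} and then simply asserts the resulting piecewise-affine expression (\ref{DMT3}) before substituting $r_k$, which is exactly your interval-by-interval evaluation. The only difference is that you actually certify optimality of the LP solution via the exchange/duality argument (and verify the endpoint identity $(M-k-1)(N-k-1)+(M+N-2k-1)=(M-k)(N-k)$ explicitly), details the paper leaves implicit.
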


\begin{proof}
By solving the above optimization problem, we obtain the following DMT:
\begin{equation}\label{DMT3}
d(r,\pmb{\zeta})=\begin{cases}\sum\limits_{i=1}^{M-k-1}(2i-1+N-M)+&\cr
\quad\quad\displaystyle{2(M-k)-1+N-M\over \zeta_{M-k}}\left(\sum\limits_{j=M-k}^{M}\zeta_j-r\right), & r\in[r_k,r_{k+1}],\;0\leq k\leq M-2;\cr
\displaystyle{N-M+1\over \zeta_1}\left(\sum\limits_{j=1}^{M}\zeta_j-r\right),& r\in[r_{M-1},r_M], \end{cases}
\end{equation}
where
$$r_k=\begin{cases}0, & k=0;\cr \sum\limits_{i=M-k+1}^{M}\zeta_i, & 1\leq k\leq M.\end{cases}$$
Substituting $r_k$ in (\ref{DMT3}), we get the DMT expression in (\ref{DMT2}).
\end{proof}
\begin{ex}
Consider a $2\times 2$ MIMO channel. The DMT curves achieved with respect to different values of $\zeta_i$ that correspond to \textit{Case 1} and \textit{Case 2} are illustrated in Fig.~1. Although the diversity at $r=0$ is not affected by the coefficients $\zeta_i\neq0$ ($d(0)=4$), the more unbalanced the coefficients are, the worse the DMT is.
\end{ex}

It is clear from the above analysis that by varying $\zeta_i$ and correspondingly varying $b$, one can fully control the maximum diversity and multiplexing gains achieved by such decoding scheme. Fig.~2 shows the achievable DMT curves under lattice sequential decoding for all possible values of $\zeta_i$ that satisfy the constraint $\sum_{i=1}^M\zeta_i=M$. The figures include both \textit{Case 1} and \textit{Case 2}.

Following the footsteps of \cite{GCD},  we are now ready to prove the following theorem:
\begin{thm} There exists a sequence of full-dimensional LAST codes with block length $T\geq M+N-1$ that achieves the DMT curve $d_b(r)$ under LAST coding and MMSE-DFE lattice Fano/Stack sequential decoding with variable bias term $b$ that is given in (\ref{bias_SNR}).
\end{thm}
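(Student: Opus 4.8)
The plan is to mirror the proof of Theorem 2 and the methodology of \cite{GCD}, now carrying the channel-dependent bias (\ref{bias_SNR}) through every step of the argument. The lower bound is already in hand: the analysis preceding Theorem 3 establishes $P_{\rm out}(\rho,b)\limi{=}\rho^{-d_b(r)}$, and since the outage probability lower-bounds the error probability of \emph{any} code, we immediately get $P_e(\rho)\limi{\geq}\rho^{-d_b(r)}$. What remains is to exhibit a sequence of full-dimensional LAST codes with $T\geq M+N-1$ whose error probability is $\limi{\leq}\rho^{-d_b(r)}$, which yields the matching achievability.

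For the upper bound I would first condition on the channel realization $\pmb{H}^c$ (equivalently on the eigenvalues $\pmb{\lambda}$) and reuse the chain of inequalities (\ref{UB200})--(\ref{newB2}) from the proof of Theorem 1, which remain valid verbatim with the fixed bias replaced by $b(\pmb{\lambda},\rho)$: for any lattice $\Lambda_c$, $\Pr(E_s\mid\Lambda_c,\pmb{H}^c)\leq \Pr(\tilde{\pmb{e}}\notin\mathcal{E}'_{T,\gamma})+\Pr(\mathcal{A}(\mathcal{E}'_{T,\gamma})\mid\Lambda_c,\pmb{H}^c)$, where $\tilde{\pmb{e}}$ is the noise of the equivalent channel $\tilde{\pmb{y}}=\pmb{x}+\tilde{\pmb{e}}$ and $\mathcal{E}'_{T,\gamma}=\{\pmb{z}:\pmb{z}^\mathsf{T}\tilde{\pmb{B}}^\mathsf{T}\tilde{\pmb{B}}\pmb{z}\leq MT(1+\gamma)\}$. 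By the matching of the decision region to the noise in Lemma 1, the first term equals the channel-independent quantity $\Pr(|\pmb{e}|^2>MT(1+\gamma))$, which by Gaussian concentration decays exponentially in $T$ for any fixed $\gamma>0$ and contributes nothing to the exponent. For the ambiguity term I would average over the Loeliger random-lattice ensemble underlying Lemma 1: the Minkowski--Hlawka bound gives $\mathsf{E}_{\Lambda_c}\Pr(\mathcal{A}(\mathcal{E}'_{T,\gamma})\mid\pmb{H}^c)\leq (1+\epsilon')V(\mathcal{E}'_{T,\gamma})/V_c$, and the volume computation already carried out in the proof of Theorem 1 turns this into $(1+\epsilon')2^{-T[\log\det(\tilde{\pmb{B}}^\mathsf{T}\tilde{\pmb{B}})^{1/2T}-R]}$; crucially, the choice (\ref{bias_SNR}) was made precisely so that $\log\det(\tilde{\pmb{B}}^\mathsf{T}\tilde{\pmb{B}})^{1/2T}=\log\eta(\pmb{\lambda},\rho)=R_b(\pmb{\lambda},\rho)$.

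Combining these two bounds and taking the channel expectation, the ensemble-averaged error probability is, up to the exponentially small noise term, bounded by $\mathsf{E}_{\pmb{H}^c}\bigl[\min\{1,\ (1+\epsilon')\,\rho^{-T[R_b(\pmb{\lambda},\rho)-R]/\log\rho}\}\bigr]$. I would split this expectation over the outage set $\mathcal{O}_b(\rho)=\{\pmb{H}^c:R_b(\pmb{H}^c,\rho)<R\}$ and its complement. On $\mathcal{O}_b(\rho)$ the integrand is at most $1$ and $\Pr(\mathcal{O}_b(\rho))\limi{=}\rho^{-d_b(r)}$ by Theorem 3. On the complement, $R_b(\pmb{\lambda},\rho)-R>0$; passing to the coordinates $\nu_i=-\log\lambda_i/\log\rho$ and using the joint density (\ref{pdf1}), the remaining contribution is an integral of the form $\int_{\mathcal{O}_b^c}\rho^{-T[R_b-R]}\,\rho^{-\sum_i(2i-1+N-M)\nu_i}\,d\pmb{\nu}$, which by a Laplace/Varadhan argument is dominated by the boundary of the outage region and therefore also carries exponent at least $d_b(r)$, provided $T\geq M+N-1$ so that $T[R_b-R]$ grows fast enough with the distance from that boundary. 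Finally, since the ensemble average satisfies $\limi{\leq}\rho^{-d_b(r)}$, for each $\rho$ there exists a lattice $\Lambda_c^\star$ and a translate $\pmb{u}_0^\star$ achieving it; a standard subsequence/diagonalization argument over a countable dense set of SNR values upgrades this to a single sequence of LAST codes, which completes the proof.

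The main obstacle is the control of the last integral. Because the bias, and hence $\tilde{\pmb{B}}$ and the entire lattice error exponent, now vary with $\pmb{\lambda}$, one must verify that for channels just outside $\mathcal{O}_b(\rho)$ the exponent $T[R_b(\pmb{\lambda},\rho)-R]$ does not vanish too slowly, and that minimizing $\sum_i(2i-1+N-M)\nu_i$ over $\mathcal{O}_b^c$ under this exponential penalty still returns exactly $d_b(r)$ rather than a smaller value; this is where the block-length condition $T\geq M+N-1$ is used, exactly as in \cite{GCD}. The sub-region where $\tilde{\pmb{B}}$ is nearly singular (channel close to outage with a large induced bias) requires the same delicate estimates as in the fixed-bias proof, and I would relegate these to an appendix, as was done for Theorem 2.
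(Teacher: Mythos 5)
Your argument follows the paper's own proof essentially step for step: the same ensemble-average decomposition of $\bar{P}_e(\rho)$ into the outage event $\{R_b<R\}$ (contributing $\rho^{-d_b(r)}$) and its complement, the same use of the ambiguity decoder of Lemma 1 with $\pmb{A}=\tilde{\pmb{B}}$ so that the choice (\ref{bias_SNR}) makes the ambiguity exponent $T[\log\eta-r\log\rho]$, and the same Varadhan-type integration of the eigenvalue density over the non-outage set with $T\geq M+N-1$ controlling both the noise-atypicality term and the boundary optimization. One minor caution: your opening claim that $P_e(\rho)\limi{\geq}\rho^{-d_b(r)}$ because ``outage lower-bounds the error probability of any code'' is not justified, since the information-theoretic converse applies to the channel's mutual-information outage and only yields $P_e(\rho)\limi{\geq}\rho^{-d^*_{\rm out}(r)}$ with $d_b(r)\leq d^*_{\rm out}(r)$; this aside is, however, unnecessary for the achievability statement actually being proved.
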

\begin{proof}
See Appendix III.
\end{proof}

\subsection{Improving the Achievable Rate}
It is clear from (\ref{fano_rate}) that lattice sequential decoders suffer from very poor performance as $b$ becomes large (achievable rate $R_b$ could reach 0!). The question that may arise here is whether the achievable rate of the decoder can be improved especially for large values of $b$ (for which low decoding complexity is to be expected \cite{MGDC}) and hence improving the error performance.

It turns out that the way the nodes are generated in the algorithm plays an important role in improving both the achievable rate and performance of the decoder without increasing the decoding complexity. For example, Schnorr-Euchner enumeration is considered a good candidate for the use in lattice Fano/Stack sequential decoding algorithms \cite{MGDC}. If the determination of best and next best nodes in the lattice Fano/Stack sequential decoder is based on the Schnorr-Euchner search strategy, then as $b\rightarrow\infty$ the decoder reduces to the MMSE-DFE decoder \cite{MGDC}, which achieves a DMT given by $(N-M+1)(1-r/M)^+$ \cite{MMSE2}.
\begin{cor}
For a fixed non-random channel matrix $\pmb{H}^c$, the rate
\begin{equation}\label{fano_rate_2}
R_b(\pmb{H}^c,\rho)\triangleq \max\biggl\{R_{\rm LAST}(\pmb{H}^c,\rho)-2M\log\left({1+\sqrt{1+8\alpha}\over 2}\right), R_{\rm MMSE-DFE}(\pmb{H}^c,\rho)\biggr\},
 \end{equation}
is achievable by LAST coding and MMSE-DFE lattice Fano/Stack sequential decoding constructed under the Schnorr-Euchner search strategy, where $R_{\rm MMSE-DFE}(\pmb{H}^c,\rho)$ is the achievable rate of the MMSE-DFE decoder, and $\alpha$ is as defined in (\ref{alpha}).
\end{cor}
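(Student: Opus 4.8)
The plan is to prove achievability of the two arguments of the maximum in (\ref{fano_rate_2}) separately and then keep the better one. The first argument, $R_{\rm LAST}(\pmb{H}^c,\rho)-2M\log\bigl(\tfrac{1+\sqrt{1+8\alpha}}{2}\bigr)$, requires nothing new: one revisits the proof of Theorem~1 and notes that the node‑generation order is never used there. The chain (\ref{UB200})--(\ref{UB300}) only invokes the ordering‑independent necessary condition $\mu(\pmb{z})>\mu_{\min}$ for a lattice point to be declared by the decoder, and everything after that is the ambiguity‑decoder / Minkowski--Hlawka existence argument of Lemma~1. Hence this rate stays achievable verbatim under the Schnorr--Euchner rule.

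For the second argument, $R_{\rm MMSE-DFE}(\pmb{H}^c,\rho)$, I would exploit the structural fact recalled just before the corollary (and in \cite{MGDC}): under the Schnorr--Euchner rule the first full‑length (depth‑$m$) path the decoder ever generates is the MMSE--DFE (Babai) point $\hat{\pmb{z}}_{\rm DFE}$, and as $b\to\infty$ the decoder becomes exactly the MMSE--DFE decoder. Since the Stack pops nodes in non‑increasing metric order and $\hat{\pmb{z}}_{\rm DFE}$ remains in the stack from the moment it is created until the algorithm halts, the leaf $\hat{\pmb{z}}$ that is finally output obeys $\mu(\hat{\pmb{z}})\geq\mu(\hat{\pmb{z}}_{\rm DFE})$; both being at depth $m$, the common bias $bm$ cancels and this reads $|\pmb{y}'-\pmb{R}\hat{\pmb{z}}|^2\leq|\pmb{y}'-\pmb{R}\hat{\pmb{z}}_{\rm DFE}|^2$. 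Assuming (by lattice symmetry) that the origin was sent, so $\pmb{y}'=\pmb{Q}^\mathsf{T}\pmb{e}$, I split the error event $\{\hat{\pmb{z}}\neq\pmb{0}\}$ on whether $\hat{\pmb{z}}_{\rm DFE}=\pmb{0}$: on $\{\hat{\pmb{z}}_{\rm DFE}\neq\pmb{0}\}$ the MMSE--DFE decoder already errs, while on $\{\hat{\pmb{z}}_{\rm DFE}=\pmb{0}\}$ the inequality becomes $|\pmb{y}'-\pmb{R}\hat{\pmb{z}}|^2\leq|\pmb{y}'-\pmb{0}|^2$ with $\hat{\pmb{z}}\neq\pmb{0}$, i.e. the (infinite‑lattice) lattice decoder errs as well. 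Therefore
$$\Pr(E_s|\Lambda_c)\leq\Pr(\text{MMSE--DFE error})+\Pr(\text{lattice-decoding error}),$$
and since the MMSE--DFE decoder is reliable below $R_{\rm MMSE-DFE}$ \cite{MMSE2} and lattice decoding is reliable below $R_{\rm LAST}\geq R_{\rm MMSE-DFE}$ (Lemma~1, \cite{GCD}), the frame error rate vanishes for every $R<R_{\rm MMSE-DFE}$. Combining the two parts gives achievability of any $R$ below the maximum in (\ref{fano_rate_2}).

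The main obstacle is the structural fact used in the second part. Rigorously showing that the first depth‑$m$ path produced by the Schnorr--Euchner Stack decoder is $\hat{\pmb{z}}_{\rm DFE}$ demands a careful accounting of the (lazy) child‑generation mechanism and of the exact pop order; and, more delicately, this is only clearly true once the bias $b$ exceeds a threshold — below it the decoder behaves essentially like the lattice decoder and does substantial backtracking before any leaf is even formed, so the greedy path need not be the first one completed. The way around this is to argue that whenever that threshold fails, the first argument of the maximum already exceeds $R_{\rm MMSE-DFE}$, so Theorem~1 alone covers those $R$; making this dichotomy in $b$ precise (and verifying the two ranges overlap without a gap) is the step that needs the most care.
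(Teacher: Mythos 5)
The paper does not actually supply a proof of this corollary: it is asserted on the strength of the structural property quoted from \cite{MGDC} (that under Schnorr--Euchner enumeration the decoder degenerates to the MMSE--DFE decoder as $b\to\infty$, and more generally that the DFE/Babai point is the first leaf reached), combined with Theorem~1 for the first argument of the maximum. Your reconstruction follows exactly this intended two-pronged logic, and your treatment of the first argument is correct --- the proof of Theorem~1 indeed never uses the child-generation order, only the necessary condition $\mu(\pmb{z})>\mu_{\min}$.

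The genuine gap is in the second prong, and you have correctly located it but not closed it. The inequality $\mu(\hat{\pmb{z}})\geq\mu(\hat{\pmb{z}}_{\rm DFE})$ is only available if the depth-$m$ DFE leaf has actually been generated and sits in the stack at termination. If only a prefix $\pmb{a}_j$ ($j<m$) of the greedy path has been generated, the best you get is $\mu(\hat{\pmb{z}})\geq\mu(\pmb{a}_j)$, i.e.\ $|\pmb{y}'-\pmb{R}\hat{\pmb{z}}|^2\leq b(m-j)+|{\pmb{y}'}_1^j-\pmb{R}_{jj}\pmb{a}_j|^2$, and the residual bias term $b(m-j)$ reproduces precisely the slack that Theorem~1 already pays for --- so in that case you recover only the first argument of the maximum, not $R_{\rm MMSE-DFE}$. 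Your proposed repair (a dichotomy in $b$: either the greedy leaf is completed first, or the Theorem~1 rate already dominates $R_{\rm MMSE-DFE}$) is not obviously sound, because the threshold on $b$ at which the greedy path is extended without interruption depends on the noise and channel realization, while the comparison between the two arguments of the maximum depends on $\pmb{H}^c$ and $b$ through $\alpha$; nothing guarantees the two regimes tile all cases without a gap. A cleaner route, consistent with what the paper implicitly relies on, is to treat the decoder as free to choose its operating point: for any target rate below the maximum, either run with the given $b$ and invoke Theorem~1, or (when the second argument is the larger one) note that the decoder can always fall back on the depth-first greedy completion --- i.e.\ output $\hat{\pmb{z}}_{\rm DFE}$ whenever the stack search does not improve on it --- which achieves $R_{\rm MMSE-DFE}$ by construction. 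Making the corollary rigorous for a single fixed $b$ and the unmodified Stack rule would require proving the greedy-leaf-first property, which neither you nor the paper does.
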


In what follows, we discuss some interesting results about low computational complexity receivers.

\subsection{MMSE-like Receivers: Large $N$ Analysis}
The main role of the bias term $b$ is to control the amount of computations performed by the decoder. The computational complexity of the lattice sequential decoder is defined as the total number of nodes visited by the decoder during the search. It has been shown in \cite{MGDC} via simulation, that there exists a value of $b$, say $b^*$, such that for all $b\geq b^*$, the computational complexity decreases monotonically with $b$. As $b\rightarrow \infty$, the number of visited nodes is always equal to $m$ (computational complexity of the MMSE-DFE decoder). In what follows, we discuss a very interesting result.

It is clear from the above analysis that increasing the bias $b$ can affect both diversity and multiplexing gains achieved by such a decoding scheme. However, we would like to show that at $r=0$ (i.e., at fixed rate $R$), there exists a lattice sequential decoding algorithm that can simultaneously achieve computational complexity $m$ and maximum diversity $d=MN$.

Consider the bias term given in (\ref{bias_SNR}) with $\eta(\pmb{\lambda},\rho)=\prod_{i=1}^M(1+\rho\lambda_i)^{\zeta_i}$ where the coefficients $0<\zeta_i<1$ are chosen according to \textit{Case 1} such that $\zeta_i=\epsilon$ for all $i$. In this case, as $\rho\rightarrow \infty$, it can be easily verified that $b\limi{=}\rho^{{(1-\epsilon)\over M}\sum_{i=1}^M(1-\alpha_i)^+}$. The probability that $b$ exceeds $\rho^{\kappa/M}$, for $0<\kappa<M$, can be evaluated as follows:
\begin{equation*}
\begin{split}
\Pr(b\geq \rho^{\kappa/M})&\limi{=}\Pr\left((1-\epsilon)\sum_{i=1}^M(1-\alpha_i)^+\geq \kappa\right)=1-\Pr\left(\sum_{i=1}^M(1-\alpha_i)^+<{\kappa\over (1-\epsilon)}\right)\cr
&\limi{=}1-\rho^{-\left(N-{\kappa\over (1-\epsilon)}\right)^+\left(M-{\kappa\over (1-\epsilon)}\right)^+}.
\end{split}
\end{equation*}
It is clearly seen that, as $N$ becomes large, with probability close to 1 the bias term $b\rightarrow \infty$ as $\rho\rightarrow \infty$. Therefore, for such choice of $\eta(\pmb{\lambda},\rho)$, at high SNR we can achieve \textit{linear} computational complexity but at the expense of losing the optimal tradeoff. However, as argued in the proof of Theorem 3, at $r=0$ we have $d=MN$. Therefore, as $\rho\rightarrow \infty$, linear computational complexity $m$ and maximum diversity gain $MN$ can be achieved simultaneously for large values of $N$. We can conclude that there exists a lattice sequential decoding algorithm that achieves ML decoder's diversity gain, $MN$, at $r=0$ (fixed rate $R$) when $N\rightarrow \infty$.

\section{Computational Complexity: Tail Distribution in the High SNR Regime}
Lattice sequential decoders are constructed as an alternative to sphere decoders (or equivalently lattice decoders) to solve the CLPS problem with much lower computational complexity. Due to the random nature of the channel matrix and the additive noise, the computational complexity of both decoders is considered difficult to analyze in general. As such, most of the work related to such analysis has been performed via first and second order statistics of the complexity \cite{HV},\cite{HV2},\cite{JB}. However, in \cite{SJ}, Seethaler \textit{et. al.} took a different path and analyzed the sphere decoder through its complexity tail distribution defined as $\Pr(C\geq L)$, where $C$ is the total number of computations performed by the decoder and $L$ is the distribution parameter. This approach follows naturally from the randomness of the computational complexity of such decoding scheme. It has been shown in \cite{SJ} that, for large $L$ (i.e., as $L\rightarrow\infty$), the complexity distribution of sphere decoder is of a Pareto-type that is given by $L^{-(N-M+1)}$.

As discussed earlier, the bias term $b$ is responsible for the performance-complexity tradeoff achieved by the lattice sequential decoders \cite{MGDC}. For example, setting $b=0$, we achieve the best performance (performance of sphere decoder) but at the expense of very large decoding complexity. On the other extreme, setting $b=\infty$, lattice sequential decoder that uses Schnorr-Euchner enumeration becomes equivalent to the MMSE-DFE decoder. Although it achieves very low decoding complexity, it suffers from poor performance. In our work, we consider the case of fixed (finite) $b$.
It turns out that for fixed but not large values of $b$, the complexity distribution's tail exponent $e(r)$ defined by
$$e(r)=\lim_{\rho\rightarrow \infty}{-\log\Pr(C\geq L)\over \log \rho},$$
is asymptotically lower bounded by the DMT achieved by the LAST coding and sequential decoding schemes, i.e., $e(r)\geq d_{\rm out}(r)$, and does not depend on the bias term at the high SNR regime. However, increasing the value of $b$ could significantly lower the computational complexity (e.g., as $b\rightarrow\infty$, $\Pr(C>L)=0$ for $L\geq m$) but at the expense of great loss in the achievable DMT.

We consider only lattice codes that are DMT optimal. Also, for the sake of simplicity we consider the Stack algorithm in analyzing the decoder's computational complexity. It must be noted that the following analysis is \textit{only} valid for finite but small values of $b$.

In this section, we would like to analyze the computational complexity of the MMSE-DFE lattice Stack sequential decoder with bias term $b>0$, particularly at the high SNR regime. We are interested in bounding the tail distribution of the decoder's computational complexity at high SNR.

\begin{thm} The asymptotic computational complexity distribution of the MMSE-DFE lattice sequential decoder in an $M\times N$ LAST coded MIMO channel with codeword length $T\geq N+M-1$, is upper bounded by the asymptotic outage probability, i.e.,
\begin{equation}
\Pr(C\geq L)\limi{\leq} \rho^{-d^*_{\rm out}(r)},
\end{equation}
for all $L$ that satisfy
\begin{equation}\label{L}
L\geq m+ \sum\limits_{k=1}^m{{(4\pi)}^{k/2} \over \Gamma(k/2+1)}{[bk+MT(1+\log\rho)]^{k/2}\over \det(\pmb{R}_{kk}^\mathsf{T}\pmb{R}_{kk})^{1/2}},
\end{equation}
where $\pmb{R}_{kk}$ is the lower $k\times k$ part of $\pmb{R}=\pmb{Q}^\mathsf{T}\pmb{BG}$, and $d^*_{\rm out}(r)=(M-r)(N-r)$.
\end{thm}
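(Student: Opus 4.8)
The plan is to bound the number $C$ of nodes visited by the Stack decoder \emph{pathwise} --- for a fixed $\pmb{H}^c$ and noise vector $\pmb{e}$ --- by a sum of lattice-point counts, and then to show that this pathwise bound cannot exceed the quantity $L_0$ on the right-hand side of (\ref{L}) except on an event whose probability is $\limi{\leq}\rho^{-d^*_{\rm out}(r)}$. As usual we invoke lattice symmetry and assume the all-zero codeword was transmitted, so $\pmb{y}'=\pmb{e}'=\pmb{Q}^\mathsf{T}\pmb{e}$, the metric of the length-$k$ correct-path prefix is $bk-|{\pmb{e}'}_1^k|^2$, and its running minimum is $\mu_{\min}=\min\{0,\,b-|{\pmb{e}'}_1^1|^2,\ldots,bm-|{\pmb{e}'}_1^m|^2\}$. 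The first step is the standard ``frontier'' property of the Stack algorithm: before termination the popped nodes form a subtree containing the root, so the correct path meets the current stack in exactly one node, whose metric is at least $\mu_{\min}$; since a node is popped only when its metric is maximal among the stack contents, \emph{every} visited node $\pmb{z}_1^k$ satisfies $\mu(\pmb{z}_1^k)\geq\mu_{\min}$, whether or not decoding ultimately succeeds. From $|{\pmb{e}'}_1^k|^2\leq|\pmb{e}'|^2=|\pmb{e}|^2$ we get $\mu_{\min}\geq-|\pmb{e}|^2$, so the number $N_k$ of nodes visited at level $k$ obeys
\[
N_k\;\leq\;\#\bigl\{\pmb{z}_1^k\in\mathbb{Z}^k:\ |{\pmb{y}'}_1^k-\pmb{R}_{kk}\pmb{z}_1^k|^2\leq bk+|\pmb{e}|^2\bigr\},
\]
i.e.\ $N_k$ is at most the number of points of the lattice $\Lambda(\pmb{R}_{kk})$ inside a ball of squared radius $bk+|\pmb{e}|^2$, and $C\leq m+\sum_{k=1}^{m}N_k$, the leading $m$ absorbing the root together with the levels at which this ball is too small to contain a lattice point.

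The second step turns the lattice-point count into the volume ratio appearing in (\ref{L}). The number of points of $\Lambda(\pmb{R}_{kk})$ in a ball of squared radius $bk+|\pmb{e}|^2$ is bounded (up to the additive ``one'' already accounted for) by the ratio of the volume of a slightly enlarged ball to the fundamental volume $\det(\pmb{R}_{kk}^\mathsf{T}\pmb{R}_{kk})^{1/2}$; writing $V(\mathcal{S}_k(\varrho))=(\pi\varrho^2)^{k/2}/\Gamma(k/2+1)$ and using $(\sqrt{a}+\sqrt{c})^2\leq4(a+c)$ gives a summand of the form $\frac{(4\pi)^{k/2}}{\Gamma(k/2+1)}\frac{\xi_k^{k/2}}{\det(\pmb{R}_{kk}^\mathsf{T}\pmb{R}_{kk})^{1/2}}$, where $\xi_k$ is an effective squared radius built from $bk$, $|\pmb{e}|^2$, and the geometry of $\Lambda(\pmb{R}_{kk})$. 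The remaining work is to show that $\xi_k\leq bk+MT(1+\log\rho)$ for all $k$ once $\rho$ is large, on the complement of an event of probability $\limi{\leq}\rho^{-d^*_{\rm out}(r)}$: the $|\pmb{e}|^2$ part is at most $MT\log\rho$ off an event whose probability decays faster than any negative power of $\rho$ (Chernoff applied to the $\tfrac12\chi^2_{2MT}$ law of $|\pmb{e}|^2$), while the geometric part --- controlled via the upper-triangular structure of $\pmb{R}=\pmb{Q}^\mathsf{T}\pmb{B}\pmb{G}$, the power constraint (\ref{PC}), and the non-outage condition on $\pmb{H}^c$ --- is absorbed into the residual slack. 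On the resulting ``good'' event one then has $C\leq m+\sum_{k=1}^m\frac{(4\pi)^{k/2}}{\Gamma(k/2+1)}\frac{[bk+MT(1+\log\rho)]^{k/2}}{\det(\pmb{R}_{kk}^\mathsf{T}\pmb{R}_{kk})^{1/2}}=L_0$.

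For the third step, on the complement of $\{|\pmb{e}|^2>MT\log\rho\}\cup\mathcal{O}(\rho)$ --- where $\mathcal{O}(\rho)=\{R_{\rm LAST}(\pmb{H}^c,\rho)<R\}$ is the ML/MMSE-DFE lattice outage event --- we have $C\leq L_0$, so for any $L$ with $L\geq L_0$ the event $\{C\geq L\}$ is contained in $\{|\pmb{e}|^2>MT\log\rho\}\cup\mathcal{O}(\rho)$. The first event has probability $\limi{\leq}\rho^{-d}$ for every $d>0$; and for the DMT-optimal nested LAST codes under consideration, which requires $T\geq N+M-1$, the outage analysis of \cite{ZT},\cite{GCD} gives $\Pr(\mathcal{O}(\rho))\limi{=}\rho^{-d^*_{\rm out}(r)}$ with $d^*_{\rm out}(r)=(M-r)(N-r)$. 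Combining, $\Pr(C\geq L)\limi{\leq}\rho^{-d^*_{\rm out}(r)}$ for all $L\geq L_0$, which is the claim.

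I expect the second step to be the main obstacle. The pathwise count of points of $\Lambda(\pmb{R}_{kk})$ in a ball is close to the volume ratio $V(\mathcal{S}_k(\cdot))/\det(\pmb{R}_{kk}^\mathsf{T}\pmb{R}_{kk})^{1/2}$ only when that sub-lattice is not too skew; naively enlarging the ball by the covering radius of $\Lambda(\pmb{R}_{kk})$ over-counts by a factor that can be polynomially large in $\rho$, so one must exploit the upper-triangular structure of $\pmb{R}$ --- and separately handle the levels $k$ at which $\Lambda(\pmb{R}_{kk})$ is coarse, via the $+m$ term --- to land exactly on the form in (\ref{L}). Equivalently, one has to show that the only way the pathwise bound $m+\sum_kN_k$ can exceed $L_0$ is through some $\det(\pmb{R}_{kk}^\mathsf{T}\pmb{R}_{kk})$ being atypically small, and that this regime is captured by the outage event $\mathcal{O}(\rho)$; tying the geometry of the sub-lattices of $\Lambda(\pmb{B}\pmb{G})$ to the information-theoretic outage set is where the real work lies. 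The first and third steps are, by comparison, routine.
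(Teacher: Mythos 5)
Your first step (every extended node has metric at least $\mu_{\min}\geq -|\pmb{e}'|^2$, hence the level-$k$ count is at most the number $S_k$ of points of $\Lambda(\pmb{R}_{kk})$ in a ball of squared radius $bk+|\pmb{e}'|^2$) and your count-to-volume conversion both appear in the paper; the latter is its Lemma 2, proved by averaging over a dither uniform on $\mathcal{V}_{\pmb{0}}(\pmb{R}_{kk})$, a device that sidesteps, rather than confronts, the skewness issue you flag. Where you part ways with the paper --- and where the gap is --- is your third step: the containment $\{C\geq L_0\}\subseteq\{|\pmb{e}|^2>MT\log\rho\}\cup\mathcal{O}(\rho)$. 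Note that $L_0$ already carries the factor $\det(\pmb{R}_{kk}^\mathsf{T}\pmb{R}_{kk})^{-1/2}$, so ``some $\det(\pmb{R}_{kk})$ being atypically small'' is precisely the failure mode that $L_0$ absorbs and hence cannot be what pushes $C$ above $L_0$. The residual failure mode of a pathwise count-versus-volume comparison is the \emph{skewness} of the projected sub-lattices $\Lambda(\pmb{R}_{kk})$ (a rank-$k$ lattice of covolume $1$ with one short vector of length $\epsilon$ holds $\sim\varrho/\epsilon$ points in a ball of radius $\varrho$, against a volume ratio of order $\varrho^{k}$), and the skewness of the intermediate blocks of $\pmb{Q}^\mathsf{T}\pmb{B}\pmb{G}$ is governed by the eigenvalue spread of $\pmb{H}^c$ and the code, not by whether $\prod_i(1+\rho\lambda_i)$ clears $\rho^{r}$. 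A channel can be far from outage and still yield polynomially-in-$\rho$ skew $\pmb{R}_{kk}$ at intermediate $k$, so tying this to $\mathcal{O}(\rho)$ --- the step you yourself identify as ``where the real work lies'' --- has no mechanism behind it, and I do not see one.

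The deeper symptom that the architecture is off is that the exponent $d^*_{\rm out}(r)$ in the theorem does not, in the paper's proof, come from the outage event containing $\{C\geq L\}$. The paper applies Markov's inequality, $\Pr(C\geq L\,|\,\cdot)\leq \mathsf{E}\{\tilde{C}\}/(L-m)$, and factorizes $\mathsf{E}\{\tilde{C}\}\leq\bigl(\sum_k S_k\bigr)\sum_{\pmb{x}\in\Lambda_c^*}\Pr\bigl(|\pmb{B}\pmb{x}|^2-2(\pmb{B}\pmb{x})^\mathsf{T}\pmb{e}<bm\bigr)$; for $L\geq m+\sum_kS_k$ the prefactor is at most one, and what survives is exactly the pairwise-error union bound from the proof of Theorem 2, which --- averaged over the random-lattice ensemble and then over the non-outage channels --- gives $\rho^{-d^*_{\rm out}(r)}$, with the outage event entering only as one additive term in the final split. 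In other words, the probability mass the theorem accounts for is the event that some wrong full-length lattice point beats the threshold $bm$ (the same event that drives decoding errors), for which your decomposition into ``large noise'' plus ``outage'' has no counterpart. To complete your argument you would have to reintroduce this first-moment step, at which point you would be reproducing the paper's proof; as written, the proposal's central deterministic containment is unsupported and, for skew sub-channels outside outage, false.
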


\begin{proof}
The input to the decoder, after QR preprocessing ($\pmb{B}\pmb{G}=\pmb{Q}\pmb{R}$) of (\ref{linear_model}), is given by $\pmb{y}'=\pmb{Q}^\mathsf{T}\pmb{y}=\pmb{R}\pmb{z}+\pmb{e}'$, where $\pmb{e}'=\pmb{Q}^\mathsf{T}\pmb{e}$. Let $\mu_{\min}=\min\{0,b-|{\pmb{e}'}_1^1|^2,2b-|{\pmb{e}'}_1^2|^2,\ldots,bm-|{\pmb{e}'}_1^m|^2\}$ be the minimum metric that corresponds to the transmitted path. Without loss of generality, we assume that $N\geq M$. Due to lattice symmetry, we assume that the all zero codeword, i.e., $\pmb{0}$, was transmitted.

First, let
$$C=\sum_{k=1}^{m}\sum_{\pmb{z}_1^k\in\mathbb{Z}^k}\phi(\pmb{z}_1^k),$$
be a random variable that denotes the total number of visited nodes during the search, where $\phi(\pmb{z}_1^k)$ is the indicator function defined by
$$\phi(\pmb{z}_1^k)=\begin{cases} 1, &\text{if node $\pmb{z}_1^k$ is extended;}\cr
                                                          0, &\text{otherwise.}\end{cases}$$
                                                          In this case, the computational complexity tail distribution can be expressed as $\Pr(C\geq L)$, where $L$ is the distribution parameter.
                                                          Now, a node at level $k$, i.e., $\pmb{z}_1^k$, may be extended by the Stack decoder if $\mu(\pmb{z}_1^k)>\mu_{\min}$, or equivalently, if $|{\pmb{e}'}_1^k-\pmb{R}_{kk}\pmb{z}_1^k|^2\leq bk-\mu_{\min}$. The difficulty in analyzing the computational complexity of the lattice Stack sequential decoder stems from the fact that the distribution of the partial matrix $\pmb{R}_{kk}$ is hard to obtain in general. Another factor that may complicate the analysis is $\mu_{\min}$ which is a noise dependent term. However, we can simplify the analysis by considering the following.

First, the complexity tail distribution can be upper bounded as
\begin{equation}\label{UBb1}
\Pr(C\geq L)\leq \Pr(C\geq L,|\pmb{e}'|^2\leq R_s^2)+\Pr(|\pmb{e}'|^2>R_s^2).
\end{equation}
where $R_s^2>0$.

Next, we would like to further upper bound the second term in the RHS of (\ref{UBb1}). We can first write $\phi(\pmb{z}_1^k)$ as
$$\phi(\pmb{z}_1^k)=\begin{cases} 1, &\text{if $|{\pmb{e}'}_1^k-\pmb{R}_{kk}\pmb{z}_1^k|^2\leq bk-\mu_{\min}$;}\cr
                                                          0, &\text{otherwise,}\end{cases}$$

Given $|\pmb{e}'|^2\leq R_s^2$, and by noticing that $-(\mu_{\min}+|\pmb{e}'|^2)\leq 0$, we obtain
\begin{equation}
\sum_{\pmb{z}_1^k\in\mathbb{Z}^k}\phi(\pmb{z}_1^k)\leq\sum_{\pmb{z}_1^k\in\mathbb{Z}^k}\phi^{'}(\pmb{z}_1^k),
\label{aa}
\end{equation}
where
\begin{equation}\label{C111}
\phi^{'}(\pmb{z}_1^k)=\begin{cases} 1, &\text{if $|{\pmb{e}'}_1^k-\pmb{R}_{kk}\pmb{z}_1^k|^2\leq bk+R_s^2$;}\cr
                                                          0, &\text{otherwise.}\end{cases}
                                                          \end{equation}
Now, let
$$
\phi^{''}_k(\pmb{z})=\begin{cases} S_k, &\text{if $|\pmb{e}'-\pmb{R}\pmb{z}|^2\leq bm-\mu_{\min}$;}\cr
                                                          0, &\text{otherwise,}\end{cases}
$$
where
\begin{equation}\label{Sk}
S_k=\sum_{\pmb{z}_1^k\in\mathbb{Z}^k}\phi^{'}(\pmb{z}_1^k),
\end{equation}
then it can be easily shown that
$$C\leq\sum\limits_{k=1}^m\sum_{\pmb{z}\in\mathbb{Z}^m}\phi^{''}_k(\pmb{z})\leq\sum\limits_{k=1}^m\sum_{\pmb{x}\in\Lambda_c}\tilde{\phi}_k(\pmb{x}),$$
where
$$\tilde{\phi}_k(\pmb{x})=\begin{cases} S_k, &\text{if $|\pmb{B}\pmb{x}|^2-2(\pmb{B}\pmb{x})^\mathsf{T}\pmb{e}\leq bm$;}\cr
                                                          0, &\text{otherwise,}\end{cases}
                                                        . $$
Notice the independence of the above upper bound on $\mu_{\min}$. Consider now the following lemma:
\begin{lem}
In the lattice Stack sequential decoder with finite bias $b>0$, the number of visited nodes at level $k$, given that $|\pmb{e}'|^2\leq MT(1+\log\rho)$, can be upper bounded by
\begin{equation}\label{lemma}
\sum_{\pmb{z}_1^k\in\mathbb{Z}^k}\phi(\pmb{z}_1^k)\leq S_k\leq {{(4\pi)}^{k/2}\over \Gamma(k/2+1)}{[bk+MT(1+\log\rho)]^{k/2}\over \det(\pmb{R}_{kk}^\mathsf{T}\pmb{R}_{kk})^{1/2}},
\end{equation}
where $S_k$ is as defined in (\ref{Sk}).
\end{lem}
\begin{proof}
See Appendix III.
\end{proof}

For a given lattice $\Lambda_c$, we have
\begin{equation}
\begin{split}
{\rm Pr}(C\geq L|\Lambda_c,|\pmb{e}'|^2\leq MT(1+\log\rho))&\leq\Pr(\tilde{C}\geq L-m|\Lambda_c,|\pmb{e}'|^2\leq MT(1+\log\rho))\cr
&\leq{\mathsf{E}_{\pmb{e}'}\{\tilde{C}|\Lambda_c,|\pmb{e}'|^2\leq MT(1+\log\rho)\}\over L-m}, \quad \text{for $L>m$,}
\end{split}
\end{equation}
where the last inequality follows from using Markov inequality, and $\tilde{C}$ is defined as
$$\tilde{C}=\sum_{k=1}^{m}\sum_{\pmb{z}_1^k\in\mathbb{Z}^k\backslash\{\pmb{0}\}}\phi(\pmb{z}_1^k),$$
since we have assumed that the all-zero lattice point was transmitted.

The conditional average of $\tilde{C}$ with respect to the noise can be further upper bounded as
\begin{equation}\label{NUB1}
\begin{split}
\mathsf{E}_{\pmb{e}'}\{\tilde{C}|\Lambda_c,|\pmb{e}'|^2\leq MT(1+\log\rho)\}&\leq \sum\limits_{k=1}^m S_k\sum\limits_{\pmb{x}\in\Lambda_c^*}\Pr(|\pmb{B}\pmb{x}|^2-2(\pmb{B}\pmb{x})^{\mathsf{T}}\pmb{e}<bm)
\end{split}
\end{equation}
Therefore, we have
\begin{equation}\label{EE1}
\Pr(C\geq L|\Lambda_c,|\pmb{e}|^2\leq MT(1+\log\rho))\leq{\sum_{k=1}^m S_k\over L-m}\sum\limits_{\pmb{x}\in\Lambda_c^*}\Pr(|\pmb{B}\pmb{x}|^2-2(\pmb{B}\pmb{x})^{\mathsf{T}}\pmb{e}<bm).
\end{equation}

Following the proof of Theorem 2 (see Appendix I), and by averaging over the ensemble of random lattices we get, for $L>m+\sum_{k=1}^m S_k$
\begin{equation}\label{UBa1}
\Pr(C\geq L)\limi{\leq} \rho^{-T[\sum_{j=1}^{\min\{M,N\}}(1-\alpha_j)^+-{r}]}.
\end{equation}
Define $\mathcal{B}=\{\pmb{\nu}\in\mathbb{R}_+^M:\nu_1\geq\cdots\geq\nu_M\geq 0,\; \sum_{i=1}^{M}(1-\nu_i)^{+}<r\}$. By separating the event $\{\pmb{\nu}\in \mathcal{A}\}$ from its complement, we obtain:
\begin{equation}\label{UU1}
\Pr(C\geq L)\leq  {\rm Pr}(\pmb{\nu}\in\mathcal{A})+\Pr(|\pmb{e}'|^2>MT(1+\log\rho))+
\Pr(C\geq L,\pmb{\nu}\in\overline{\mathcal{A}},|\pmb{e}'|^2\leq MT(1+\log\rho))
\end{equation}

The behavior of the first term in (\ref{UU1}) at high SNR is $\rho^{-d^*_{\rm out}(r)}$, where $d^*_{\rm out}(r)$ is as defined in Theorem 1. The second term can be shown to be upper bounded by $\rho^{-d^*_{\rm out}(r)}$ (see \cite{GCD}). Averaging the third term over the channels in $\overline{\mathcal{A}}$ set, we obtain,
\begin{equation}\label{Nm1}
\Pr(C\geq L)\limi{\leq}\rho^{-d^*_{\rm out}(r)}+\int_{\overline{\mathcal{A}}}f_{\pmb{\nu}}(\pmb{\nu})\Pr(C\geq L|\pmb{\nu})\;d\pmb{\nu}\limi{\leq}\rho^{-d^*_{\rm out}(r)},
\end{equation}
for all $L\geq m+\sum_{k=1}^mS_k$, where $f_{\pmb{\nu}}(\pmb{\nu})$ is the joint probability density function of $\pmb{\nu}$ defined in (\ref{pdf1}).
\end{proof}

The above results reveal that if the number of computations performed by the decoder exceeds
\begin{equation}\label{L0}
L_0= m+ \sum\limits_{k=1}^m{{(7\pi)}^{k/2} \over \Gamma(k/2+1)}{[bk+MT(1+\log\rho)]^{k/2}\over \det(\pmb{R}_{kk}^\mathsf{T}\pmb{R}_{kk})^{1/2}},
\end{equation}
the complexity distribution of the lattice sequential decoder at high SNR is upper bounded by the asymptotic outage probability. Now, if a ``time-out'' limit is imposed at the decoder to terminate the search when the number of computations exceeds this limit, then $L_0$ represents the minimum value that should be set by the decoder without resulting in a loss in the optimal DMT. To see this, suppose that the lattice (Stack) sequential decoder imposes a time-out limit so that the search is terminated once the number of computations reaches $L_0$, and hence the decoder declares an error. Assuming $E'_s$ is the event that the decoder performs an error when $C<L_0$, in this case, the average error probability is given by
\begin{equation}\label{Pe_C}
P_e(\rho)=\Pr(E'_s \cup \{C\geq L_0\})\leq \Pr(E'_s) + \Pr(C\geq L_0)\limi{\leq} \rho^{-d^*_{\rm out}(r)}.
\end{equation}
This can be very beneficial in two-ways MIMO communication systems (e.g, MIMO automatic repeat request~\cite{MIMOARQ}), where the feedback channel can be used to eliminate the decoding failure probability. In applications where there is a hard-limit on the buffer size, the decoder declares an error when the complexity goes above the limit.

It should be noted that the above analysis does not yield the full picture of the decoder's complexity in general. As mentioned previously, the complexity of the decoder depends critically on the bias $b$ chosen in the algorithm. Unfortunately, it is still unclear how the SNR exponent $e(r)$ is affected by the value $b$ in general. However, as $b\rightarrow \infty$, the MMSE-DFE lattice sequential decoder under Schnorr-Euchner enumeration becomes equivalent to the MMSE-DFE decoder \cite{DGC}. The total number of computations performed by this decoder is always equal to $m$. This corresponds to an SNR exponent $e(r)=\infty$. Thus, we can conclude that, at high SNR, as $b$ increases the SNR exponent $e(r)$ increases as well.

Another criterion that is used to characterize the computational complexity of such a decoder is through its average complexity. Since $L_0$ is random, it would be interesting to calculate the minimum average number of computations required by the decoder to terminate the search. This is considered next.

\section{Average Computational Complexity}
It is to be expected that when the channel is ill-conditioned (i.e., in outage) the computational complexity becomes extremely large. Moreover, when the channel is in outage it is highly likely that the decoder performs an erroneous  detection. However, when the channel is \textit{not} in outage, there is still a non-zero probability that the number of computations will become large (see (\ref{Nm1}) and (\ref{L0})). As such, it is sometimes desirable to terminate the search even when the channel is not in outage. Therefore, we would like to determine the \textit{minimum} average number of computations that is required in order for the decoder to determine when to terminate the search.

In other words, we would like to find the minimum average number of computations that is required by the decoder to achieve the optimal DMT. This can be expressed as
\begin{eqnarray}\label{d2}
L_{\rm out}=\mathsf{E}\{L_0(\pmb{H}^c\in\overline{\mathcal{O}})\}.
\end{eqnarray}

Before we do that, we would like first to study the asymptotic behavior of $L_0$. As mentioned in Section I, we focus our analysis on nested LAST codes, specifically LAST codes that are generated using construction A which is described below (see \cite{Loe}).

We consider the Loeliger ensemble of mod-$p$ lattices, where $p$ is a prime. First, we generate the set of all lattices given by
$$\Lambda_p=\kappa (\mathsf{C}+p\mathbb{Z}^{2MT})$$
where $p\rightarrow \infty$, $\kappa\rightarrow 0$ is a scaling coefficient chosen such that the fundamental volume $V_f=\kappa^{2MT}p^{2MT-1}=1$, $\mathbb{Z}_p$ denotes the field of mod-$p$ integers, and $\mathsf{C}\subset\mathbb{Z}_p^{2MT}$ is a linear code over $\mathbb{Z}_p$ with generator matrix in systematic form $[\pmb{I}\;\pmb{P}^\mathsf{T}]^\mathsf{T}$. We use a pair of self-similar lattices for nesting. We take the shaping lattice to be $\Lambda_s=\phi\Lambda_p$, where $\phi$ is chosen such that the covering radius is $1/2$ in order to satisfy the input power constraint. Finally, the coding lattice is obtained as $\Lambda_c=\rho^{-r/2M}\Lambda_s$. Interestingly, one can construct a generator matrix of $\Lambda_p$ as (see \cite{Conway})
\begin{equation}
\pmb{G}_p=\kappa\begin{pmatrix}
\pmb{I} & \pmb{0}\\
\pmb{P} & p\pmb{I}
 \end{pmatrix},
\end{equation}
which has a lower triangular form. In this case, one can express the generator matrix of $\Lambda_c$ as $\pmb{G}=\rho^{-r/2M}\pmb{G}'$, where $\pmb{G}'=\zeta\pmb{G}_p$. Thanks to the lower triangular format of $\pmb{G}$. If $\pmb{M}$ is an $m\times m$ arbitrary full-rank matrix, and $\pmb{G}$ is an $m\times m$ lower triangular matrix, then one can easily show that
\begin{equation}\label{matrix_kk}
\det[(\pmb{MG})_{kk}]= \det(\pmb{M}_{kk})\det(\pmb{G}_{kk}),
\end{equation}
where $(\pmb{MG})_{kk}$, $\pmb{M}_{kk}$, and $\pmb{G}_{kk}$, are the lower $k\times k$ part of $\pmb{MG}$, $\pmb{M}$, and $\pmb{G}$, respectively.

Using the above result, one can express the determinant that appears in (\ref{L0}) as
\begin{equation}\label{matrix_kk2}
\det(\pmb{R}_{kk}^\mathsf{T}\pmb{R}_{kk})=\det(\pmb{B}_{kk}^\mathsf{T}\pmb{B}_{kk})\det(\pmb{G}_{kk}^\mathsf{T}\pmb{G}_{kk})=\rho^{-rk/2M}\det(\pmb{B}_{kk}^\mathsf{T}\pmb{B}_{kk})\det({\pmb{G}'}_{kk}^\mathsf{T}{\pmb{G}'}_{kk})
\end{equation}
Let $\mu_1\leq \mu_2\leq\cdots\leq\mu_k$ be the ordered nonzero eigenvalues of $\pmb{B}_{kk}^\mathsf{T}\pmb{B}_{kk}$, for $k=1,\cdots,m$. Then,
$$\det(\pmb{B}_{kk}^\mathsf{T}\pmb{B}_{kk})=\prod\limits_{j=1}^k\mu_j.$$
Note that for the special case when $k=m$ we have $\mu_{2(j-1)T+1}=\cdots=\mu_{2jT}=1+\rho\lambda_j((\pmb{H}^c)^\mathsf{H}\pmb{H}^c)$, for all $j=1,\cdots,M$.

Denote $\alpha'_i=-\log\mu_i/\log\rho$. Using (\ref{matrix_kk2}), one can asymptotically express $L_0$ as
\begin{equation}
L_0=m+(\log\rho)^{m/2}\sum\limits_{k=1}^m(\log\rho)^{k/2}\rho^{c_k},
\end{equation}
where
\begin{equation}
c_k={1\over 2}\sum\limits_{j=1}^k \left({r\over M} - \alpha'_j\right)^+.
\end{equation}
Now, since $c_k$ is non-decreasing in $k$, we have at high SNR
\begin{equation}
L_0=m+(\log\rho)^{m/2}\rho^{c_m},
\end{equation}
where
$$c_m=T\sum\limits_{i=1}^M \left(\displaystyle{r\over M}-(1-\alpha_i)^+\right)^+.$$

The average of $L_0$ at high SNR (averaged over the channel statistics) when the channel is not in outage is given by
\begin{align*}
\mathsf{E}\{L_0(\pmb{H}^c\in\overline{\mathcal{O}})\}&=\int\limits_{\pmb{\alpha}\in\overline{\mathcal{O}}}L_0 f_{\pmb{\alpha}}(\pmb{\alpha})\;d\pmb{\alpha}\\
&=m+(\log\rho)^{m/2}\int\limits_{\pmb{\alpha}\in\overline{\mathcal{O}}}\exp\biggl(\log\rho\biggl[T\sum\limits_{i=1}^M\biggl({r\over M}-(1-\alpha_i)^+\biggr)^+-\\
&\qquad \qquad \qquad\qquad\sum\limits_{i=1}^M (2i-1+N-M)\alpha_i\biggr]\biggr)\;d\pmb{\alpha}\\
&=m+(\log\rho)^{m/2}\rho^{l(r)},
\end{align*}
where $\overline{\mathcal{O}}=\left\{\pmb{\alpha}\in\mathbb{R}_+^M: \sum_{i=1}^{M}(1-\alpha_i)^{+}\geq r\right\}$, and
\begin{equation}\label{l'_r}
l(r)=\max_{\pmb{\alpha}\in\overline{\mathcal{O}}} \left[T\sum\limits_{i=1}^M\left({r\over M}-(1-\alpha_i)^+\right)^+-\sum\limits_{i=1}^M (2i-1+N-M)\alpha_i\right].
\end{equation}
It is not so difficult to see that the optimal channel coefficients that maximize (\ref{l'_r}) are
$$\alpha_i^*=1, \quad \hbox{for }i=1,\cdots,M-k,$$
and
$$\alpha_i^*=0, \quad \hbox{for }i=M-k+1,\cdots,M,$$
i.e., the same $\pmb{\alpha}^*$ that achieves the optimal DMT of the channel.
Substituting $\pmb{\alpha}^*$ in (\ref{l'_r}), we get
\begin{equation}\label{exponent_MMSE}
l(r)={Tr(M-r)\over M}-(M-r)(N-r),
\end{equation}
for $r=0,1,\cdots,M$. In this case, the asymptotic minimum average computational complexity that is required by the decoder to achieve near-optimal performance (as well as the optimal DMT) can be expressed as
\begin{equation}\label{UB_MMSE}
L_{\rm out}=2MT+(\log\rho)^{MT}\rho^{l(r)}.
\end{equation}

The above interesting result indicates that if the ``average'' number of computations performed by the decoder exceeds $L_{\rm out}$, the decoder can terminate the search without affecting the optimal DMT. We discuss here some special cases about the behavior of $L_{\rm out}$ in terms of the system parameters: $\rho$, $M$, $N$, and $r$. Consider the case of $M=N$. Assuming the use of an optimal random nested LAST code of codeword length $T$ and fixed rate $R$, i.e., $r=0$. In this case, one can see that $l_{\rm mmse}(0)<0$ irrespective to the value of $T$, i.e., the average complexity is bounded for all $T$. It is clear that the term $(\log\rho)^{2MT}\rho^{-NM}$ decays quickly to $0$ as $\rho\rightarrow\infty$. The simulation results (introduced next) agree with the above analysis.

It is interesting to note that, there exists a \textit{cut-off} multiplexing gain, say $r_0$, such that the average computational complexity of the decoder remains bounded as long as we operate below such value. This value can be easily found by setting $l_{\rm mmse}(r_0)=0$. This results in
$$r_0=\biggl\lfloor{MN\over M+T}\biggr\rfloor.$$

However, it should be noted that the above cut-off multiplexing gain corresponds to a sequential decoding algorithm that uses a fixed bias. If we need to operate at $r>r_0$, larger values of bias term must be used. In fact, one must let $b$ to scale with SNR as $b=\rho^\epsilon$ for some $\epsilon>0$ in order to keep the average complexity bounded when operating beyond $r_0$. However, according to the analysis provided in Section III, this causes a loss in the optimal tradeoff. Therefore, the lattice sequential decoder provides a systematic approach for tradeoff DMT, cut-off multiplexing gain, and complexity.

Another way to reduce the computational complexity without the need of increasing the bias value, is to increase the number of receive antennas $N$. If we let $N\rightarrow \infty$, then one can achieve a multiplexing gain $r_0=M$ which is the maximum multiplexing gain achieved by the channel.

To see the great advantage of using the lattice sequential decoder with constant bias term over the lattice decoder implemented via sphere decoding algorithms, we compare the average computational complexity of both decoders when MMSE-DFE is presented. It has been shown in \cite{WD2} that, for moderate-to-high SNR, the average computations performed by the MMSE-DFE sphere decoder when the channel is not in outage, say $L_{\rm sphere}$ for a system with $m=2MT$ signal dimension is given by (assuming fixed rate $r=0$)
\begin{equation}
L_{\rm sphere}=2MT+ {(\log\rho)^{2MT}\over \rho^{MN}}.
\end{equation}
The ratio of the asymptotic average complexity of both decoders, say $\gamma$, is given by
$$\gamma={L_{\rm sphere}\over L^{\rm mmse}_{\rm sequential}}={2MT+(\log\rho)^{2MT}/ \rho^{MN}\over 2MT+(\log\rho)^{MT}/\rho^{MN}}.$$
This is a huge saving in computational complexity, especially for large signal dimensions  and moderate-to-high SNR. For example, consider the case of a $3\times 3$ LAST coded MIMO system with $T=5$.  At $\rho=10^3$ (30 dB), we have $\gamma\approx 31$, i.e., the sphere decoder's complexity is about 31 times larger than the complexity of the lattice sequential decoder. As will be shown in the sequel, simulation results agree with the above theoretical results. For $\rho < 30$ dB, one would expect the ratio $\gamma\gg 31$. For extremely high SNR values (e.g., $\rho\gg 30$ dB), it seems that $\gamma\rightarrow 1$ as $\rho\rightarrow\infty$.

\section{Numerical Results}
Throughout the simulation study, the fading coefficients are generated as independent identically distributed circularly symmetric complex Gaussian random variables. The LAST code is obtained as an $(m=2MT,p,k)$ Loeliger construction (refer to \cite{Loe} for a detailed description of the linear code obtained via Construction A).

In Fig.~3, we compare the performance in terms of the frame error rate of a MIMO system with $M=N=2$, $T=3$ and rate $R=4$ bits per channel use (bpcu) under naive and MMSE-DFE lattice sequential decoding. For both decoders we fix the bias term to $b=0.6$. It is clear that the MMSE-DFE lattice sequential decoder outperforms the naive one, where the former achieves diversity order of 4 (the maximum diversity gain achieved by the channel) and the latter achieves diversity order of 2. To validate the achievability of the optimal DMT with LAST coding and MMSE-DFE lattice sequential decoding, we consider the performance of a MIMO system with $M=N=2$, $T=3$ for different rates $R=4,\;8,\;10.34$ bpcu, which is illustrated in Fig.~4. The constant gap between the outage probability and the error performance for different $R$ confirms our theoretical results.

Fig.~5 and Fig.~6 show the effect of increasing the bias term on diversity order and average computational complexity (number of visited nodes during the search) achieved by lattice sequential decoding. As discussed earlier, increasing the bias term in the decoding algorithm significantly reduces decoding complexity but at the expense of losing diversity. For the $2\times2$ LAST coded MIMO system with $T=3$, as $b\rightarrow\infty$ we achieve linear computational complexity $m=12$ for all SNR, and diversity order 1. For sequential decoding algorithms that implement the Schnorr-Euchner enumeration, this corresponds to the performance and the complexity of the MMSE-DFE decoder.

In our computational complexity distribution simulation, we consider a MIMO system with $M=N=2$, $T=3$ for different rates $R=4,\;8$ bits per channel use. First, the frame error rate of the MMSE-DFE lattice sequential decoder is plotted in Fig.~7.$(a)$ when $b=0.6$. The computational complexity distribution $\Pr(C>L)$ is plotted for such a decoder at different rates when $L$ is allowed to scale with the SNR as $L=\rho$ (see Fig.~7.$(b)$). It is clear from both figures that the curves which correspond to the error probability and the computational complexity distribution match in slope, i.e., they both exhibit the same behavior at high SNR. Equivalently, both curves have the same SNR exponent. This basically agrees with the derived theoretical results.

The complexity saving advantage that lattice sequential decoders posses over lattice (sphere) decoders is depicted in Fig.~8 and Fig.~9, for the same LAST coded MIMO channel with $R=4$ bits per channel use. One can notice the amount of computations saved by lattice sequential decoders for all values of SNR, especially for large signal dimensions (see Fig.~10). Even at high SNR, the sphere decoder still exhibits large decoding complexity compared to the lattice sequential decoder. For example, as depicted in Fig.~10, at $\rho=30$ dB, the average complexity of the sphere decoder is about 30 times the complexity of the lattice sequential decoder for an optimal LAST coded MIMO system with dimension $m=30$. This is achieved at the expense of small loss in performance ($\sim$0.6 dB). This agrees with the derived theoretical results.

Fig.~10 shows how the average complexity of the MMSE-DFE lattice sequential decoder decays with the SNR irrespective to the codeword length $T$ for a fixed rate $R$, i.e., for $r=0$. This agrees with the theoretical results derived in the previous section. Finally, Fig.~11 proves (by simulation) the fact that the MMSE-DFE lattice sequential decoder has a cut-off rate such that the average complexity of the decoder remains bounded as long as we operate below it. The figure shows that for fixed $M$, $N$, and $T$, if we increase the rate, the average complexity increases as well and becomes unbounded even at high SNR.

\section{Summary}
In this chapter, we have provided a complete analysis for the performance limits of the lattice Fano/Stack sequential decoder applied to the LAST coded MIMO system. The achievable rate of  the channel is derived. It turns out that the achievable rate under lattice sequential decoding depends critically on the decoding parameter, the bias term. The bias term is responsible for the excellent performance-complexity tradeoff achieved by such decoding scheme. For fixed values of the bias, it has been shown that the optimal tradeoff of the channel can be achieved. As the bias grows without bound, lattice sequential decoding achieves linear computational complexity, where the total number of visited nodes during the search is always equal to the lattice code dimension. As such, lattice sequential decoders bridge the gap between lattice (sphere) decodes and low complexity receivers (e.g., the MMSE-DFE decoder). At high SNR, it was argued that there exists a lattice sequential decoding algorithm that can achieve maximum diversity gain at very low multiplexing gain, especially for large number of receive antennas.

We have also provided a complete analysis for the computational complexity of the lattice sequential decoder applied to the LAST coded MIMO systems at the high SNR regime. It has been shown that for the MMSE-DFE lattice sequential decoder, if the number of computations performed by the decoder exceeds a certain limit, then the complexity's tail distribution becomes dominated by the outage probability with an SNR exponent that is equivalent to the DMT achieved by the corresponding coding and decoding schemes. The tradeoff of the channel is naturally extended to include decoding complexity. Moreover, the decoder asymptotic average computational complexity has also been analyzed. Finally, it has been shown that there exists a cut-off multiplexing gain for which the average complexity remains bounded as long as we operate below such value.

\section*{Appendix I\\Proof of Theorem 2}
The input to the decoder, after QR preprocessing ($\pmb{B}\pmb{G}=\pmb{Q}\pmb{R}$) of (\ref{linear_model}), is given by $\pmb{y}'=\pmb{Q}^\mathsf{T}\pmb{y}=\pmb{R}\pmb{z}+\pmb{e}'$, where $\pmb{e}'=\pmb{Q}^\mathsf{T}\pmb{e}$. Let $E_s$ be the event that the lattice Stack sequential decoder makes an erroneous detection. Due to lattice symmetry, we assume that the all zero codeword $\pmb{0}$ was transmitted. Now, any sequence $\pmb{x}=\pmb{Gz}\neq\pmb{0}$, $\pmb{x}\in\Lambda_c$ can be decoded as the closest lattice point by the decoder only if its metric $\mu(\pmb{z}_1^m)$ is greater than $\mu_{\min}$. Therefore, for a given lattice $\Lambda_c$,
\begin{equation}\label{UB1}
\begin{split}
\Pr(E_s|\Lambda_c)&\leq\sum\limits_{\pmb{z}\in\mathbb{Z}^m\backslash\{\pmb{0}\}}\Pr(\mu(\pmb{z}_1^m)>\mu_{\min})\\
&=\sum\limits_{\pmb{z}\in\mathbb{Z}^m\backslash\{\pmb{0}\}}\Pr(|{\pmb{e}'}-\pmb{R}\pmb{z}|^2<bm-\mu_{\min}).
\end{split}
\end{equation}
where $\mu_{\min}=\min\{0,b-|{\pmb{e}'}_1^1|^2,2b-|{\pmb{e}'}_1^2|^2,\ldots,bm-|{\pmb{e}'}_1^m|^2\}$ is the minimum metric that corresponds to the transmitted path. The upper bound in (\ref{UB1}) follows from the union bound, and due to the fact that in general, $\mu(\pmb{z}_1^m)>\mu_{\min}$ is just a necessary condition for $\pmb{x}$ to be decoded by the lattice Stack sequential decoder. By noticing that $-(\mu_{\min}+|{\pmb{e}'}|^2)\leq 0$, we get
\begin{equation}\label{UB3}
\Pr(E_s|\Lambda_c)\leq\sum\limits_{\pmb{x}\in\Lambda_c^{*}}\Pr(|\pmb{B}\pmb{x}|^2-2(\pmb{B}\pmb{x})^{\mathsf{T}}\pmb{e}'<bm),
\end{equation}
where $\Lambda_c^*=\Lambda_c\backslash\{\pmb{0}\}$. Note the independence of the upper bound (\ref{UB3}) on $\mu_{\min}$. We would like now to upper bound the term inside the summation in (\ref{UB3}). The difficulty here stems from the non-Gaussianity of the random vector ${\pmb{e}}'$ for any finite $T$. However, one can show (see \cite{EZ2} and \cite{GCD}) that for a well-constructed lattice the probability density function of the noise vector $\pmb{e}$, $f_{\pmb{e}}(\pmb{\nu})\leq \beta_m f_{\tilde{\pmb{e}}}(\pmb{\nu})$, where $\tilde{\pmb{e}}\sim\mathcal{N}(\pmb{0},0.5\pmb{I})$, and $\beta_m$ is a constant (has no effect at high SNR). Following the footsteps of \cite{GCD}, it can be shown that by appropriately constructing a nested LAST code we have that
\begin{equation}\label{UB4}
\Pr(E_s|\Lambda_c)\leq \beta_m\sum\limits_{\pmb{x}\in\Lambda_c^{*}}\Pr(|\pmb{B}\pmb{x}|^2-2(\pmb{B}\pmb{x})^{\mathsf{T}}\tilde{\pmb{e}}<bm),
\end{equation}
where $\tilde{\pmb{e}}\sim\mathcal{N}(0,0.5\pmb{I}_m)$, and $\beta_m$ is a constant independent of $\rho$. Using Chernoff bound,
\begin{equation}\label{UB5}
\Pr(|\pmb{B}\pmb{x}|^2-2(\pmb{B}\pmb{x})^{\mathsf{T}}\tilde{\pmb{e}}<bm)\leq\begin{cases}e^{-|\pmb{B}\pmb{x}|^2/8}e^{bm/4}, & \text{$|\pmb{B}\pmb{x}|^2>bm$;}\cr 1, &\text{$|\pmb{B}\pmb{x}|^2\leq bm.$} \end{cases}
\end{equation}
By taking the expectation over the ensemble of random lattices (see \cite{Loe}, Theorem 4),
\begin{equation}\label{Mink1}
\begin{split}
{\Pr}(E_s)=\mathsf{E}_{\Lambda_c}\{\Pr(E_s|\Lambda_c)\}&\leq{\beta_m\over V_c}\Bigg\{\int\limits_{|\pmb{B}\pmb{x}|^2<bm} d\pmb{x}+e^{bm/4}\int\limits_{|\pmb{B}\pmb{x}|^2>bm}e^{-|\pmb{B}\pmb{x}|^2/8}\;d\pmb{x}\Bigg\}\\
&\leq{\beta_m\over V_c}\Bigg\{{\pi^{m/2}(bm)^{m/2}\over\Gamma(m/2+1) \det(\pmb{B}^{\mathsf{T}}\pmb{B})^{1/2}}+ {(8\pi)^{m/2}e^{bm/4}\over  \det(\pmb{B}^{\mathsf{T}}\pmb{B})^{1/2}}\Bigg\}.
\end{split}
\end{equation}
Next, we make use of the fact that for nest lattice codes we have that (see \cite{EZ2})
$$|\mathcal{C}(\Lambda_c,\mathcal{R})|=2^{RT}={V(\mathcal{R})\over V_c}.$$
Also, it is easy to verify that
$$\det(\pmb{B}^{\mathsf{T}}\pmb{B})=\left(\det\left(\pmb{I}+{\rho\over M}(\pmb{H}^c)^\mathsf{H}\pmb{H}^c\right)\right)^{2T}.$$
Denote $R=r\log\rho$ and $0\leq\lambda_1\leq\cdots\leq\lambda_{\min\{M,N\}}$ the eigenvalues of $(\pmb{H}^c)^\mathsf{H}{\pmb{H}^c}$, then, the bound (\ref{Mink1}) can be rewritten as (conditioned on channel statistics)
\begin{equation}\label{10}
{\Pr}(E_s|\pmb{\nu})\limi{\leq} \mathcal{K}(m,b)\rho^{-T[\sum_{j=1}^{\min\{M,N\}}(1-\nu_j)^{+}-{r}]},
\end{equation}
where $\pmb{\nu}=(\nu_1,\cdots,\nu_{\min\{M,N\}})$, $\nu_i\triangleq-\log\lambda_i/\log\rho$, $(x)^+=\max\{0,x\}$, and $\mathcal{K}(m,b)$ is a constant independent of $\rho$. Now, define the set
\begin{equation} \label{outage2}
\mathcal{B}=\left\{\pmb{\nu}\in\mathbb{R}_+^{\min\{M,N\}}:\nu_1\geq\cdots\geq\nu_{\min\{M,N\}}\geq 0,\; \sum_{i=1}^{\min\{M,N\}}(1-\nu_i)^{+}<r\right\}.
\end{equation}
Using (\ref{outage2}), the probability of error can be upper bounded as follows:
\begin{eqnarray}\label{UB_outage}
\Pr(E_s)\leq {\rm Pr}(\pmb{\nu}\in\mathcal{B})+{\rm Pr}(E_s,\pmb{\nu}\in\overline{\mathcal{B}}).
\end{eqnarray}
The behavior of the first term at high SNR is $\rho^{-d^*_{\rm out}(r)}$. Averaging the second term over the channels in $\overline{\mathcal{B}}$ set, we obtain (see \cite{GCD}),
\begin{eqnarray}\label{PEs1}
\Pr(E_s)&\limi{\leq}&\rho^{-d^*_{\rm out}(r)}+\int_{\overline{\mathcal{B}}}f_{\pmb{\nu}}(\pmb{\nu})\Pr(E_s|\pmb{\nu})\;d\pmb{\nu}\cr
&\limi{\leq}&\rho^{-d^*_{\rm out}(r)},
\end{eqnarray}
where $f_{\pmb{\nu}}(\pmb{\nu})$ is the joint probability density function of $\pmb{\nu}$ given by (\ref{pdf1}).

\section*{Appendix II\\Proof of Theorem 3}
We consider an ensemble of $2MT$-dimensional random lattices $\{\Lambda_c\}$ with fundamental volume $V_c$ satisfying the Minkowski-Hlawka theorem (see \cite{GCD}, Theorem 1). The random lattice codebook is $\mathcal{C}(\Lambda,\pmb{u}_0,\mathcal{R})$, for some fixed translation vector $\pmb{u}_0$ and where $\mathcal{R}$ is the $2MT$-dimensional sphere of radius $\sqrt{MT}$ centered at the origin. The average probability of error (average over the channel and lattice ensemble) can be upper bounded as
\begin{equation}
\begin{split}
\bar{P}_e(\rho)&=\mathsf{E}_{\Lambda}\{P_e(\rho|\Lambda)\}\\
&\leq \mathsf{E}_{\Lambda}\{{\rm Pr}({\rm error},R_b(\rho)>R(\rho))\}+P_{\rm out}(\rho,b),
\end{split}
\end{equation}
where $P_e(\rho|\Lambda)$ is the probability of error for a given choice of $\Lambda$. Denote $0\leq\lambda_1\leq\cdots\leq \lambda_M$ the eigenvalues of $(\pmb{H}^c)^\mathsf{H}\pmb{H}^c$, and let $R=r\log\rho$. As shown in Section IV.B, by expressing the bias term $b$ as in (\ref{bias_SNR}), the achievable rate of lattice sequential decoding can be written as $R_b=\log\eta$, where $\eta=\phi\prod_{i=1}^M(1+\rho\lambda_i)^{\zeta_i}$. Now, define the asymptotic outage event $\mathcal{B}=\{\pmb{\beta}\in\mathbb{R}_+^{M}: \sum_{i=1}^{M}\zeta_i(1-\beta_i)^{+}<r\}$, where $\beta_i=-{\log\lambda_i/ \log\rho}$. Then, the first term in the RHS of the above upper bound can be expressed as
\begin{equation}\label{p0}
\begin{split}
\mathsf{E}_{\Lambda}\{{\rm Pr}({\rm error},R_b(\rho)>R(\rho))\}&\limi{=}\int_{\overline{\mathcal{B}}}f_{\pmb{\beta}}(\pmb{\beta})\mathsf{E}_{\Lambda}\{P_e(\rho|\pmb{\beta},\Lambda)\}\;d\pmb{\beta}\\
&\leq{\rm Pr}(|\pmb{e}'|^2>MT(1+\gamma))+\int_{\overline{\mathcal{B}}}f_{\pmb{\beta}}(\pmb{\beta}){\rm Pr}(\mathcal{A}|\pmb{\beta})\;d\pmb{\beta},
\end{split}
\end{equation}
where $\gamma>0$, and $f_{\pmb{\beta}}(\pmb{\beta})$ is the joint probability density function of $\pmb{\beta}$ which is asymptotically given by
\begin{equation}
f_{\pmb{\beta}}(\pmb{\beta})\limi{=}\exp\left(-\log(\rho)\sum\limits_{i=1}^{M}(2i-1+|N-M|)\beta_i\right).
\end{equation}

Consider here the Stack algorithm ($\delta=0$). In this case, the matrix $\tilde{\pmb{B}}$ provided in (\ref{newB}) can be expressed at high SNR as
$$\pmb{B}'=\left(1-b\rho^{-[\sum_{i=1}^M(1-\beta_i)^+-r]/M}\right)\pmb{B}.$$
Hence, at high SNR we have
\begin{equation}\label{p1}
\det(\tilde{\pmb{B}}^\mathsf{T}\tilde{\pmb{B}})\limi{=}\left(1-b\rho^{-[\sum_{i=1}^M(1-\beta_i)^+-r]/M}\right)\rho^{\sum_{i=1}^M(1-\beta_i)^+}.
\end{equation}
As $\rho\rightarrow\infty$, we can express $b$ (see (\ref{bias_SNR})) as
\begin{equation}\label{p2}
b\limi{=}{\rho^{\sum_{i=1}^M(1-\beta_i)^+/M}\over \eta^{1/M}}\left[1-\left({\eta\over \rho^{\sum_{i=1}^M(1-\beta_i)^+}}\right)^{1/2M}\right].
\end{equation}
Substituting (\ref{p2}) into (\ref{p1}), and by realizing that for all $R_b>R$ or equivalently $\eta\limi{>}\rho^r$, we can lower-bound (\ref{p1}) as $\det(\tilde{\pmb{B}}^\mathsf{T}\tilde{\pmb{B}})\geq\eta$.
Setting $\pmb{A}=\tilde{\pmb{B}}$ in Lemma 1, the ambiguity probability can be upper bounded as
\begin{equation}\label{p3}
\Pr(\mathcal{A}|\pmb{\beta})\limi{\leq}\exp(-T[\log\eta-r\log\rho]).
\end{equation}
It has been shown in \cite{GCD} that for $T\geq M+N-1$, the SNR exponent of ${\rm Pr}(|\pmb{e}'|^2>MT(1+\gamma))$ with respect to $\log\rho$ is larger than $d_0(r)>d_b(r)$. Substituting (\ref{p3}) in (\ref{p0}) we get (for $T\geq M+N-1$)
\begin{equation}\label{final_bound}
\begin{split}
\mathsf{E}_{\Lambda}\{{\rm Pr}({\rm error},&R_b(\rho)>R(\rho))\}\cr
&\limi{\leq}\int_{\overline{\mathcal{B}}}\exp\biggl(-\log(\rho)\sum\limits_{i=1}^{M}(2i-1+|N-M|)\beta_i+T\left[\sum\limits_{i=1}^{M}\zeta_i(1-\beta_i)^+-r\right]\biggr)
\;d\pmb{\beta}\cr
&\limi{=}\rho^{-d_b(r)}.
\end{split}
\end{equation}

\section*{Appendix III:\\ Proof of Lemma 1}
Without loss of generality, we assume that the all-zero lattice point was transmitted. Let
 \begin{equation}\label{phi1}
\phi'(\pmb{z}_1^k)=\begin{cases} 1, &\text{if $|{\pmb{e}'}_1^k-\pmb{R}_{kk}\pmb{z}_1^k|^2\leq bk+R_s^2$, $|{\pmb{e}'}_1^k|^2\leq R_s^2$;}\cr
                                                          0, &\text{otherwise.}\end{cases}
                                                          \end{equation}
where ${\pmb{e}'}_1^k$ is the last $k$ components of $\pmb{e}'=\pmb{Q}^\mathsf{T}\pmb{e}$, and $\pmb{Q}$ is the orthogonal matrix of the QR-decomposition of $\pmb{BG}$, and $R_s^2=MT(1+\log\rho)$. Given that $|\pmb{e}'|^2\leq R_s^2$, it must follow that $|{\pmb{e}'}_1^k|\leq R_s^2$, for all $1\leq k\leq m$. The total number of integer lattice points that satisfy (\ref{phi1}) can be upper bounded by
  \begin{equation}\label{L1}
  S_k\leq\sum\limits_{\pmb{z}_1^k\in\mathbb{Z}^k}\overline{\phi}(\pmb{z}_1^k).
  \end{equation}
  where
  \begin{equation}\label{phi2}
\overline{\phi}(\pmb{z}_1^k)=\begin{cases} 1, &\text{if $|{\pmb{e}'}_1^k-\pmb{R}_{kk}\pmb{z}_1^k|^2\leq bk+R_s^2$, $|{\pmb{e}'}_1^k|^2\leq bk+ R_s^2$;}\cr
                                                          0, &\text{otherwise.}\end{cases}
                                                          \end{equation}
 In general one can show that for any random vectors $\pmb{u}$ and $\pmb{v}$, and $r_s>0$, it holds$\{|\pmb{u}-\pmb{v}|^2\leq r_s^2,|\pmb{v}|^2\leq r_s^2\}\subseteq \{|\pmb{v}|^2\leq 4r_s^2\}$. Therefore, one can easily show that
 \begin{equation}
S_k\leq \sum\limits_{\pmb{z}_1^k\in\mathbb{Z}^k}\hat{\phi}(\pmb{z}_1^k),
 \end{equation}
 where
 \begin{equation}\label{phi2}
\hat{\phi}(\pmb{z}_1^k)=\begin{cases} 1, &\text{if $|\pmb{R}_{kk}\pmb{z}_1^k|^2\leq 4(bk+R_s^2)$;}\cr
                                                          0, &\text{otherwise.}\end{cases}
                                                          \end{equation}

We can further upper bound $S_k$ by introducing an auxiliary random variable that has a uniform distribution in the Voronoi region of the lattice $\Lambda(\pmb{R}_{kk})$. This can be done as follows:

Let
$$\tilde{\phi}(\pmb{x}_1^k+\pmb{u}_1^k)=\begin{cases} 1, & |\pmb{x}_1^k+\pmb{u}_1^k|^2\leq 7(bk+R_s^2)\\ 0, & \text{otherwise}\end{cases}$$
where $\pmb{u}_1^k$ is a random variable that is uniformly distributed in $\mathcal{V}_{\pmb{0}}(\pmb{R}_{kk})$ and independent of $\pmb{x}_1^k$. Then, assuming that there exists at least one lattice point $\pmb{x}_1^k\neq\pmb{0}$ inside the sphere, one can show that
$$S_k\leq \sum_{\pmb{x}_1^k\in \Lambda(\pmb{R}_{kk})}\tilde{\phi}(\pmb{x}_1^k+\pmb{u}_1^k)$$

The indicator function in (\ref{phi2}) can be rewritten as
\begin{align*}\hat{\phi}(\pmb{x}_1^k)&=\begin{cases} 1, &  |\pmb{x}_1^k|^2\leq 4(bk+R_s^2),\;|(\pmb{x}_1^k+\pmb{u}_1^k)-\pmb{u}_1^k|^2\leq 4(bk+R_s^2)\\ 0, & \text{otherwise}\end{cases}\\
&=\begin{cases} 1, & |\pmb{x}_1^k|^2\leq 4(bk+R_s^2),\; |\pmb{x}_1^k+\pmb{u}_1^k|^2\leq 4(bk+R_s^2)+2{\pmb{u}_1^k}^\mathsf{T}\pmb{x}_1^k+|\pmb{u}_1^k|^2\\ 0, & \text{otherwise}\end{cases}
\end{align*}
where $\pmb{u}_1^k$ is a uniform random variable in the fundamental region of the lattice $\Lambda(\pmb{R}_{kk})$. By noting that $|\pmb{u}_1^k|^2\leq (bk+R_s^2)$ [since $\pmb{u}_1^k\in\mathcal{V}_{\pmb{0}}(\Lambda(\pmb{R}_{kk}))$], and ${\pmb{u}_1^k}^\mathsf{T}\pmb{x}_1^k\leq |\pmb{u}_1^k||\pmb{x}_1^k|\leq (bk+R_s^2)$ (since $|\pmb{x}_1^k|\leq R_s$), we then have
$$\sum\limits_{\pmb{x}_1^k\in \Lambda(\pmb{R}_{kk})}\hat{\phi}(\pmb{x}_1^k)\leq \sum\limits_{\pmb{x}_1^k\in \Lambda(\pmb{R}_{kk})}\tilde{\phi}(\pmb{x}_1^k+\pmb{u}_1^k)$$
Equivalently, we have that
\begin{equation}\label{aaa}
S_k\leq \sum\limits_{\pmb{x}_1^k\in \Lambda(\pmb{R}_{kk})}\tilde{\phi}(\pmb{x}_1^k+\pmb{u}_1^k).
\end{equation}

Now, taking the average in both sides of (\ref{aaa}) over $\pmb{u}_1^k\in\mathcal{V}_{\pmb{0}}(\pmb{R}_{kk})$ we have (see Lemma 2 in \cite{Loe})
$$S_k\leq {V(\mathcal{S}_k(\sqrt{7(bk+R_s^2)}))\over V_f(\Lambda(\pmb{R}_{kk}))}$$

\begin{figure}[ht!]
\center
\includegraphics[width=3.5in]{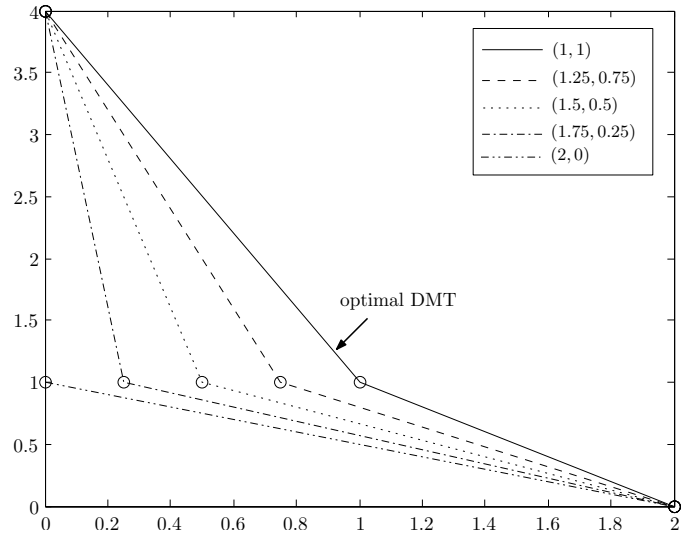}
\caption{DMT curves $d_b(r)$ achieved by lattice Fano/Stack sequential decoder for the case of 2$\times 2$ MIMO channel for different values of $(\zeta_1,\zeta_2)$.}
\end{figure}

\begin{figure}[ht!]
\centering

\subfigure[DMT curves correspond to \textit{Case} 1 in Theorem 4.]{
   \includegraphics[width =3.4in] {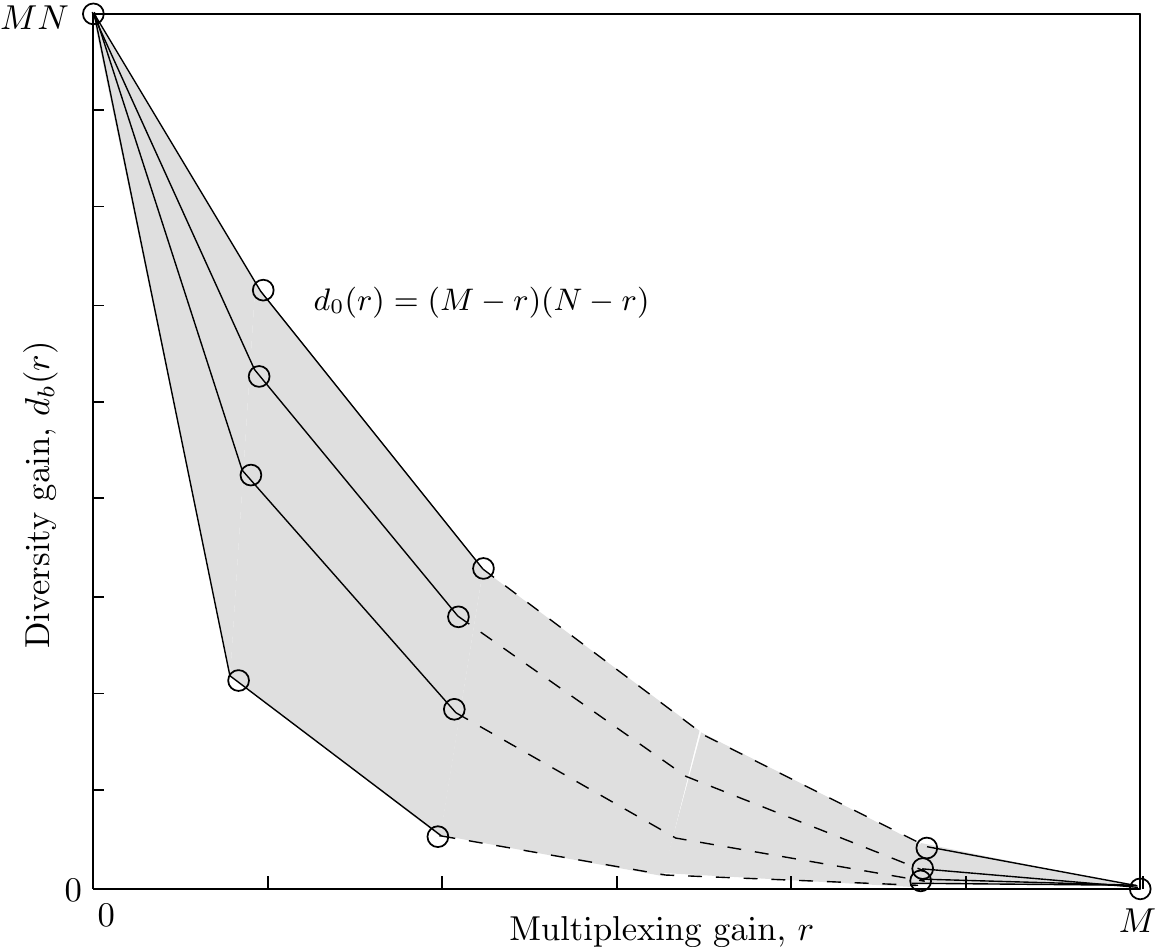}
   \label{fig:subfig1}
 }

 \subfigure[DMT curves correspond to \textit{Case} 2]{
   \includegraphics[width=3.4in] {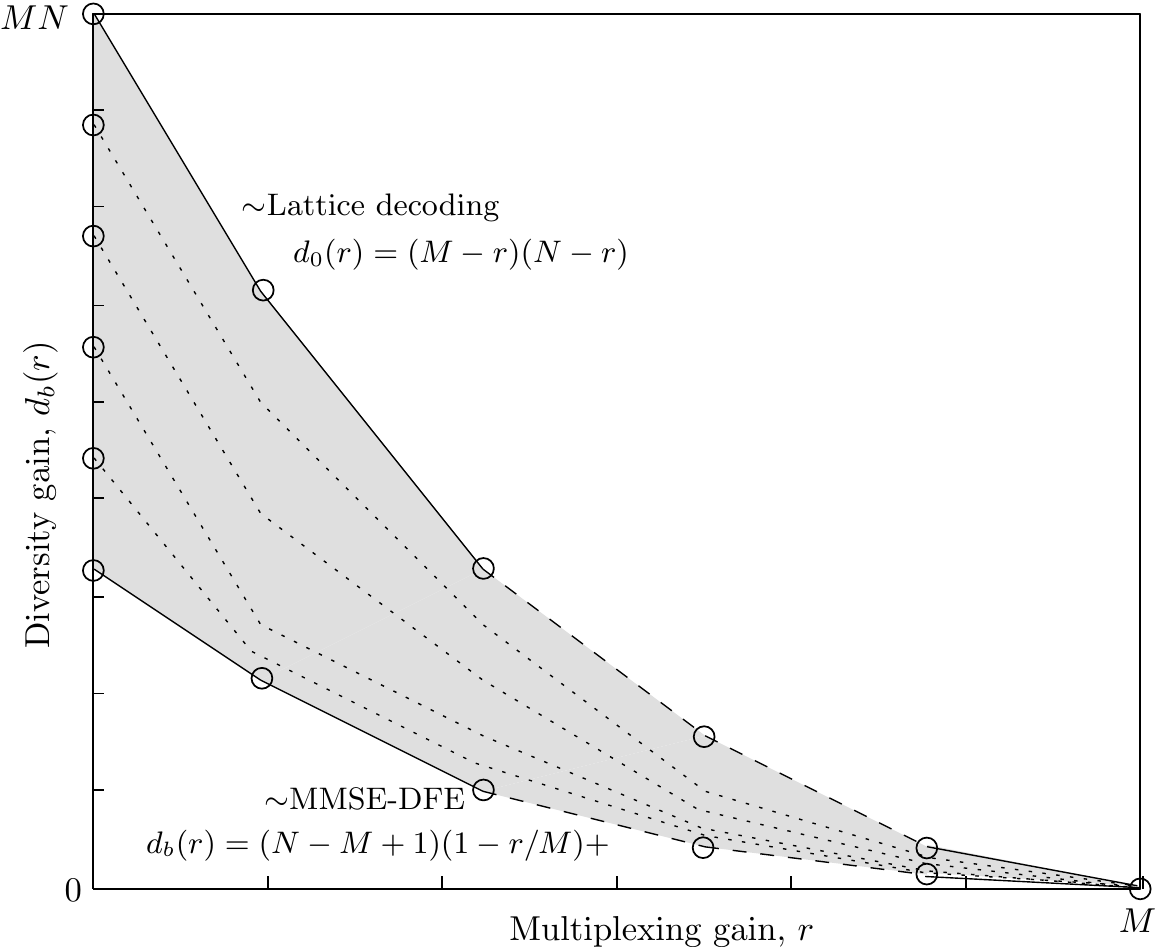}
   \label{fig:subfig2}
 }

\label{myfigure}
\caption{DMT curves $d_b(r)$ achieved by lattice Fano/Stack sequential decoder for different bias $b$.}
\end{figure}

\begin{figure}[ht!]
\center
\includegraphics[width=3.5in]{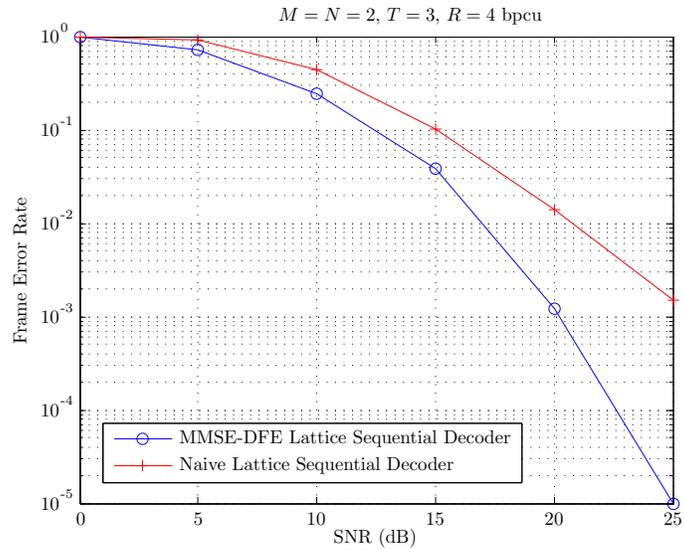}
\caption{Performance comparison between naive and MMSE-DFE lattice sequential decoding with $b=0.6$ for the case of $2\times 2$ LAST coded MIMO channel with $T=3$ and $R=4$ bpcu.}
\end{figure}

\begin{figure}[ht!]
\center
\includegraphics[width=3.5in]{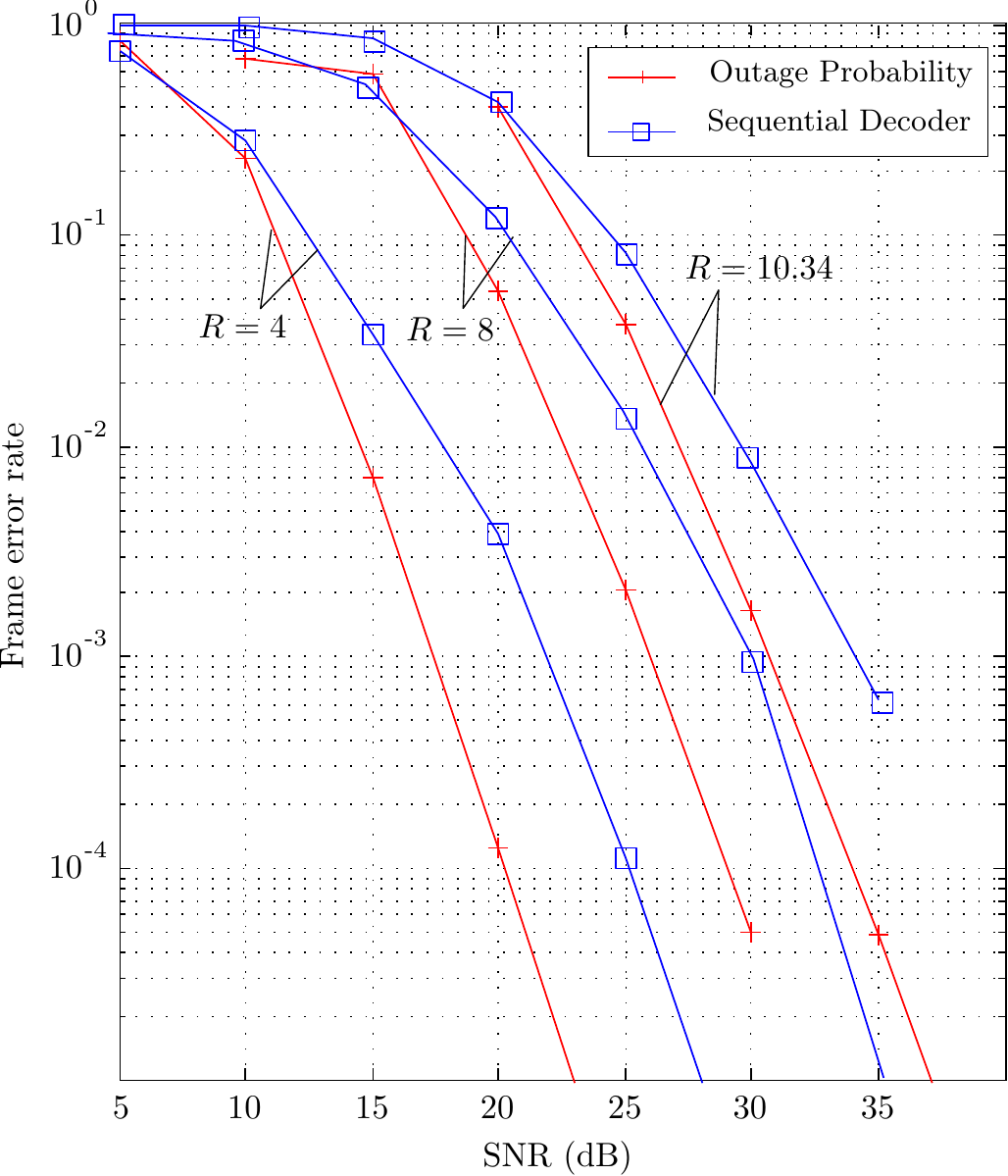}
\caption{Outage probability and error rate performance of lattice sequential decoding with $b=1$.}
\end{figure}

\begin{figure}[ht!]
\center
\includegraphics[width=3.5in]{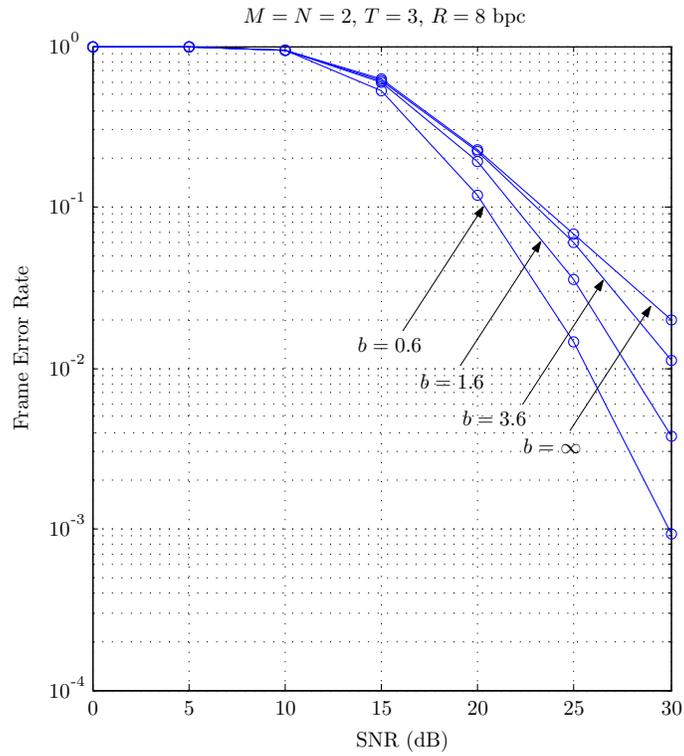}
\caption{Comparison of diversity order achieved by lattice sequential decoding for several values of $b$.}
\end{figure}

\begin{figure}[ht!]
\center
\includegraphics[width=3.5in]{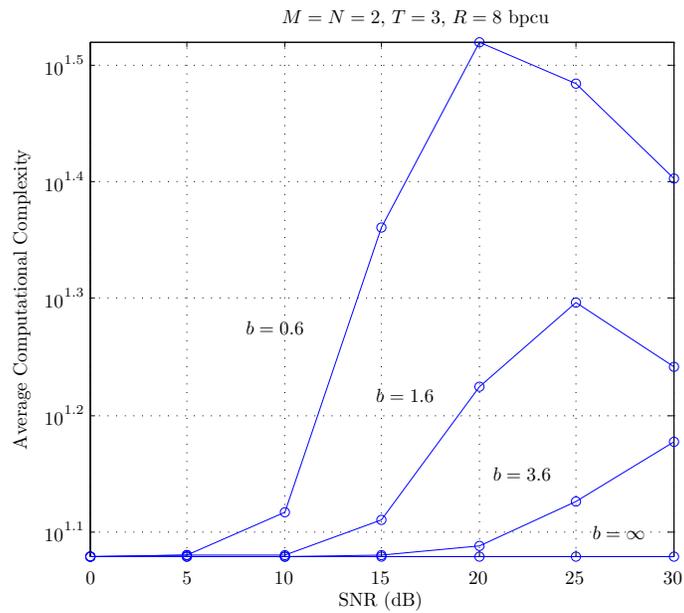}
\caption{Comparison of average computational complexity achieved by lattice sequential decoding for several values of $b$.}
\end{figure}

\begin{figure}[ht!]
\center
\includegraphics[width=4in]{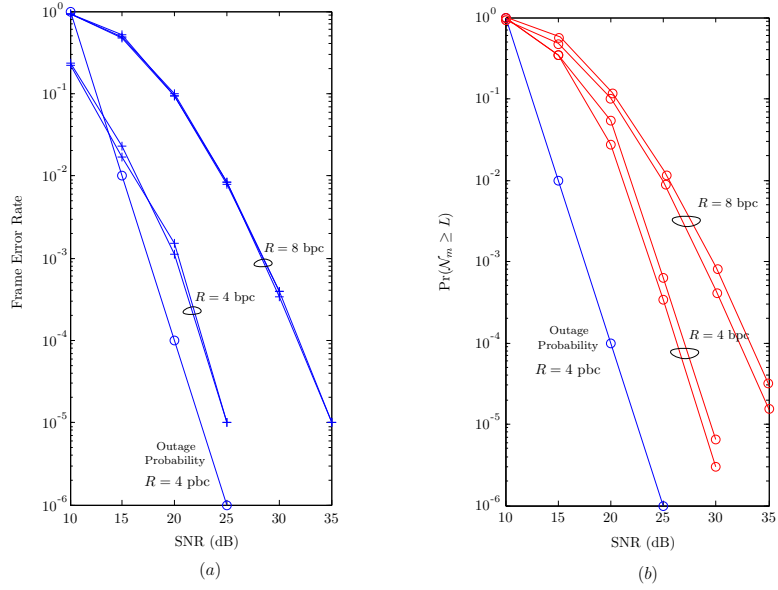}
\caption{$(a)$ Performance and $(b)$ complexity distribution achieved by the MMSE-DFE lattice sequential decoder ($b=0.6$) for the case of 2$\times 2$ LAST coded MIMO channel.}
\end{figure}

\begin{figure}[ht!]
\center
\includegraphics[width=5in]{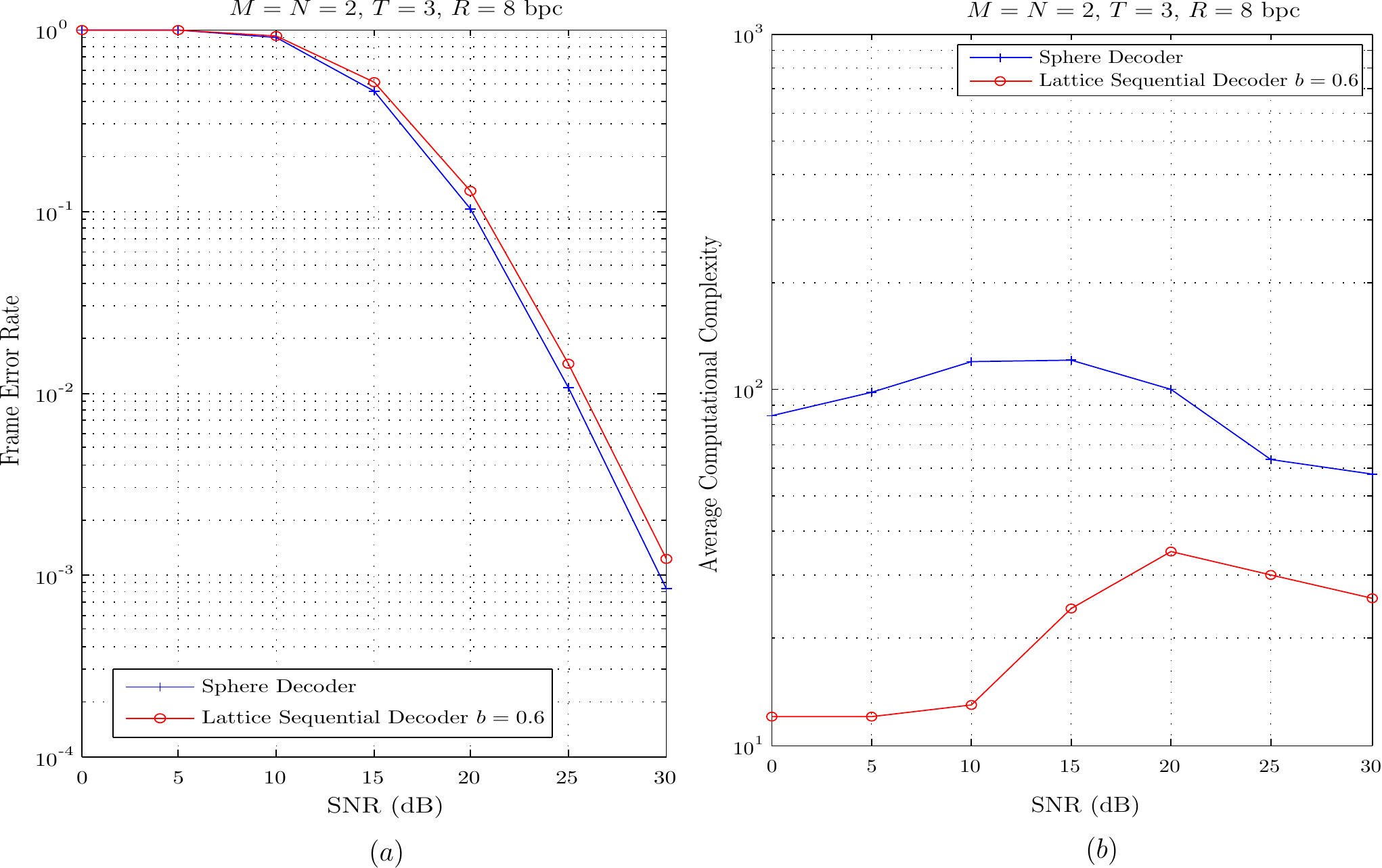}
\caption{$(a)$ Performance and $(b)$ average computational complexity comparison between sphere decoding and lattice sequential decoding for signal with dimension $m=12$.}
\end{figure}

\begin{figure}[ht!]
\center
\includegraphics[width=5in]{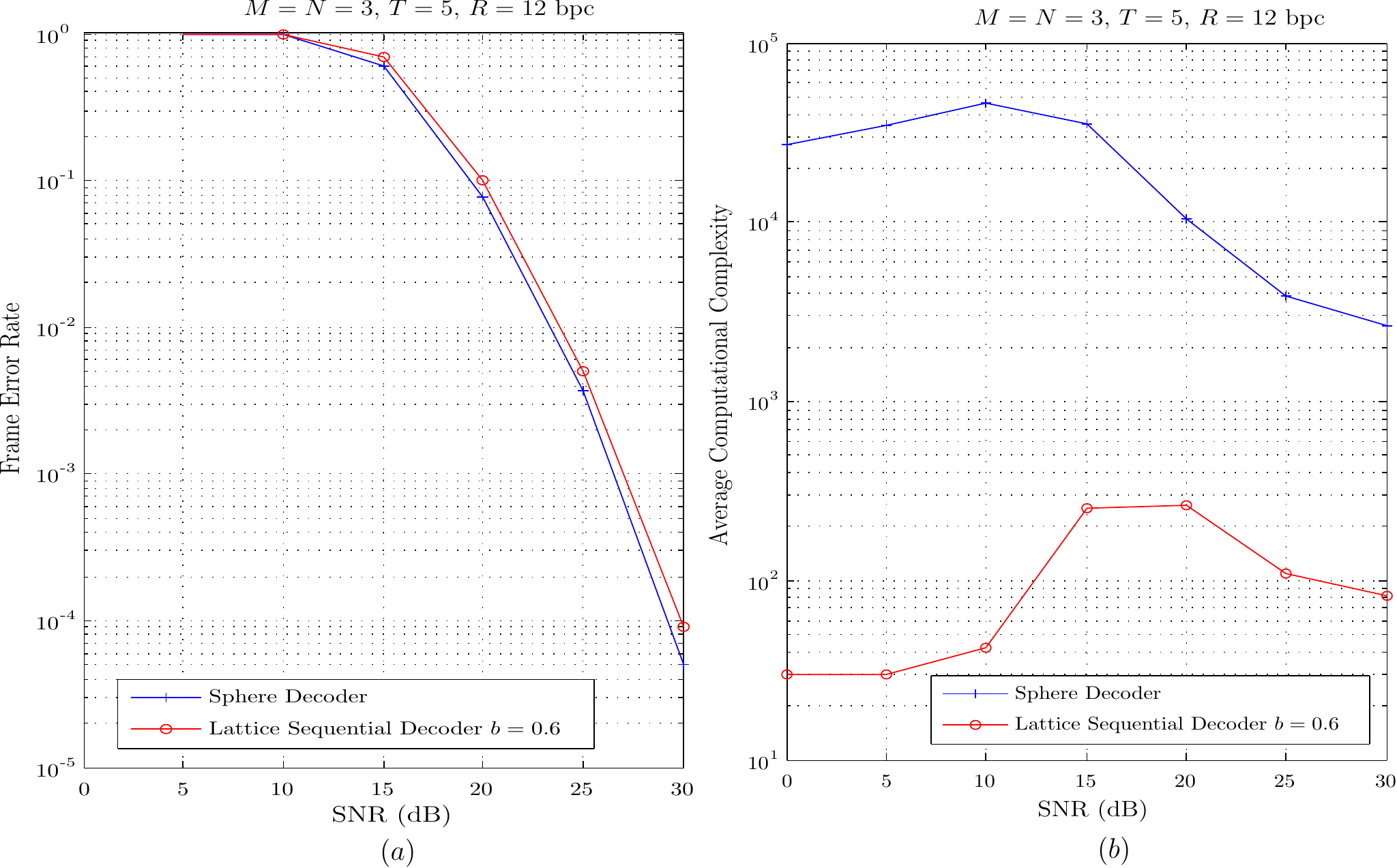}
\caption{$(a)$ Performance and $(b)$ average computational complexity comparison between sphere decoding and lattice sequential decoding for signal with dimension $m=30$.}
\end{figure}



\begin{figure}[ht!]
\center
\includegraphics[width=4.5in]{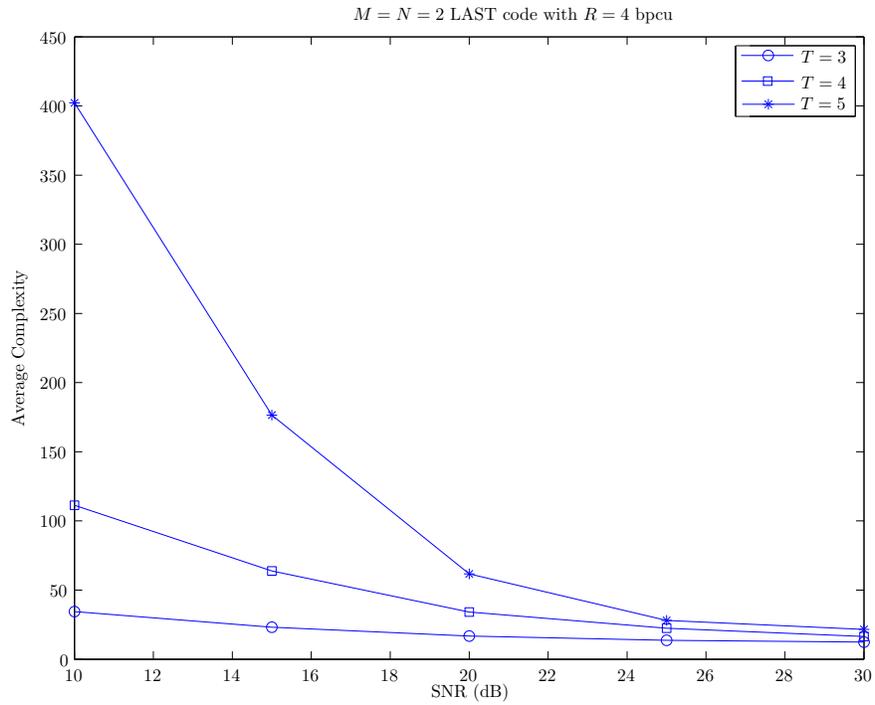}
\caption{The average computational complexity of the MMSE-DFE lattice Stack sequential decoder in a LAST coded $2\times 2$ MIMO system with different codeword lengths $T=3,4,5$ and fixed rate $R$. All curves decays quickly to $m=12,16,20$, respectively, as SNR increases.}
\end{figure}

\begin{figure}[ht!]
\center
\includegraphics[width=4.5in]{}
\caption{Plots of the average complexity of the lattice sequential decoder for an optimal nested LAST coded $2\times 2$ MIMO system with different rates $R$ in bpcu.}
\end{figure}

\end{document}